\def\bSig\mathbf{\Sigma}
\newtheorem{lemma}{Lemma}
\def\BState{\State\hskip-\ALG@thistlm}
\begin{document}
\pgfdeclarelayer{background}
\pgfdeclarelayer{foreground}
\pgfsetlayers{background,main,foreground}

% Define block styles used later
\tikzset{%
	brace/.style = { decorate, decoration={brace, amplitude=5pt} },
	mbrace/.style = { decorate, decoration={brace, amplitude=5pt, mirror} },
	label/.style = { black, midway, scale=0.5, align=center },
	toplabel/.style = { label, above=.5em, anchor=south },
	leftlabel/.style = { label,rotate=-90,left=.5em,anchor=north },   
	bottomlabel/.style = { label, below=.5em, anchor=north },
	force/.style = {scale=0.9 },
	round/.style = { rounded corners=2mm },
	legend/.style = { right,scale=0.4 },
	nosep/.style = { inner sep=0pt },
	generation/.style = { anchor=base }
}
\tikzstyle{data}=[draw, fill=white!20, text width=3.3em, 
text centered, minimum height=2em]

\tikzstyle{dataexist}=[draw, fill=blue!20, text width=2em, 
text centered, minimum height=2em]

\tikzstyle{datanoexist}=[draw, fill=white!20, text width=2em, 
text centered, minimum height=2em,dashed]

\tikzstyle{n}=[draw, fill=white!20, text width=3.4em, 
text centered, minimum height=2.3em]
\tikzstyle{ann} = [above, text width=8em, text centered,scale=0.9]
\tikzstyle{wa} = [sensor, text width=10em, fill=white!20, 
minimum height=6em, rounded corners, drop shadow]
\tikzstyle{sc} = [sensor, text width=13em, fill=white!20, 
minimum height=10em, rounded corners, drop shadow]

% Define distances for bordering
\def\blockdist{2.3}
\def\edgedist{2.5}
\title{IS-ASGD: Accelerating Asynchronous SGD using Importance Sampling}
%\titlenote{Produces the permission block, and copyright information}
%\subtitle{Extended Abstract}
%\subtitlenote{The full version of the author's guide is available as \texttt{acmart.pdf} document}

\author[1]{Fei Wang \thanks{bomber@sjtu.edu.cn}}
\author[2]{Weichen Li \thanks{weichenli@cmu.edu}}
\author[3]{Jason Ye \thanks{jason.ye.y@intel.com}}
\author[1]{Guihai Chen \thanks{gchen@cs.sjtu.edu.cn}}
\affil[1]{Department of Computer Science, Shanghai Jiao Tong University}
\affil[2]{Carnegie Mellon University}
\affil[3]{Intel Asia Pacific R\&D Ltd.}
%\authornote{Dr.~Trovato insisted his name be first.}
%\orcid{1234-5678-9012}

% The default list of authors is too long for headers.
%\renewcommand{\shortauthors}{B. Trovato et al.}

%\twocolumn[

%]

\maketitle
\begin{abstract}
	Variance reduction (VR) techniques for convergence rate acceleration of stochastic gradient descent (SGD) algorithm have been developed with great efforts recently. VR's two variants, stochastic variance-reduced-gradient (SVRG-SGD) and importance sampling (IS-SGD) have achieved remarkable progresses. Meanwhile, asynchronous SGD (ASGD) is becoming more critical due to the ever-increasing scale of the optimization problems. The application of VR in ASGD to accelerate its convergence rate has therefore attracted much interest and SVRG-ASGDs are therefore proposed. However, we found that SVRG suffers dissatisfying performance in accelerating ASGD when the datasets are sparse and large-scale. In such case, SVRG-ASGD's iterative computation cost is magnitudes higher than ASGD which makes it very slow. On the other hand, IS achieves improved convergence rate with few extra computation cost and is invariant to the sparsity of dataset. This advantage makes it very suitable for the acceleration of ASGD for large-scale sparse datasets. In this paper we propose a novel IS-combined ASGD for effective convergence rate acceleration, namely, IS-ASGD. We theoretically prove the superior convergence bound of IS-ASGD. Experimental results also demonstrate our statements.
\end{abstract}
\section{Introduction}
For the empirical risk minimizations (ERM) problems, stochastic gradient descent (SGD) may be the most widely adopted solver algorithm. Let $w$ be the optimizer to be learned, denote 
\begin{equation}
f_i(w)=\phi_i(w)+\eta r(w),
\end{equation}
where $\phi_{i}$, $i\in\{1,2,...,n\}$ are vector functions that map $\mathbb{R}^d\to\mathbb{R}$ and $r(w)$ is the regularizer and $\eta$ is the regularization factor. This paper studies the following ERM optimization problem:
\begin{equation}
\label{opt}
%\mathop{\mathrm{min}}_{w\in \mathbb{R}^d}F(w):=\mathbb{E}_{i\sim P}[f_i(w)],
\mathop{\mathrm{min}}_{w\in \mathbb{R}^d}F(w):=\frac{1}{n}\sum_{i=1}^{n}f_i(w).
\end{equation}
For SGD algorithm, $w$ is updated as:
\begin{equation}
\label{update}
w_{t+1} = w_t-\lambda\nabla f_{i_t}(w_t),
\end{equation} 
where ${i_t}\sim P$ means $i$ is drawn iteratively with respect to sampling probability distribution $P$ and $\lambda$ is the step-size. With the growing concurrency of hardwares, lock-free asynchronous SGD (ASGD) algorithms \cite{hogwild} have been developed to for speedup. With the improved speed and scalability, ASGDs quickly become indispensable and are de facto solvers for large-scale sparse optimizations. With the maturing of ASGDs, many following interests naturally shifted to the convergence acceleration techniques of them.
\subsection{Variance Reduction for Convergence Acceleration of ASGD}
It is commonly known that the variance of the stochastic gradient:
\begin{equation}
\mathbb{E}\left[\mathbb{V}(\nabla f_{i_t}(w_t)-\nabla F(w_t)\right],
\end{equation}
is one of the major reasons that slow down the convergence rate of SGD. The uses of variance reduction (VR) techniques to accelerate the iterative convergence rate of SGD have therefore attracted much interest recently. VR improves the iterative convergence rate by constructing variance-reduced gradient instead of using the original stochastic gradient directly for model update. 

One VR algorithm, stochastic variance-reduced-gradient (SVRG) \cite{SVRG} uses historical true-gradient and model snapshots to reduce the gradient variance. SVRG and its variants, e.g., SAGA \cite{Ghahramani} have been reported to be successful in accelerating the iterative convergence rate of SGD. Meanwhile, along with the intensively-studied SVRG-styled VR algorithms, another newly proposed VR technique, namely, importance sampling (IS) algorithm also achieves decreased stochastic gradient variance and improved convergence bound for SGD effectively by using non-uniform sampling (of the training samples) schemes as proposed in literatures \cite{p_zhao, Strohmer, Needell, csiba2016importance}.
\begin{algorithm}[t]
	\caption{Generic SVRG-styled ASGD}\label{general_g_VR}
	\begin{algorithmic}[1]
		\Procedure{SVRG-ASGD}{$T$}
		\State \textbf{Parallel} \textbf{do}
		\Indent \For{$t=0;t\not=T;t$++}
		\If{sync($t$)}
		\State $s=w_t$
		\State $\mu =\frac{1}{n}\sum_{i=0}^{n}\nabla f_i(s)$
		\EndIf
		\State $v_{t} = \nabla f_{i_t}(w_t)-\nabla f_{i_t}(s)+\mu$ \Comment{$i_t \sim P$}
		\State $w_{t+1} = w_t-\lambda v_t$\Comment{$w_t$ is the global model.}
		\EndFor
		\EndIndent
		\State \textbf{End Parallel}
		\State \textbf{return} 
		\EndProcedure
	\end{algorithmic}
\end{algorithm}

Recently, with the success of VR techniques, combining ASGD with VR to further improve its convergence rate shows practical significance and has therefore been studied, and several related works were proposed. Interestingly, we found that all these works by far are based on SVRG-styled ASGD (SVRG-ASGD), e.g., \cite{Shen-Yi, DBLP:conf/aaai/HuoH17, DBLP:conf/aaai/MengCYWML17, DBLP:conf/aaai/LiuSC17,Reddi} while the research of IS-styled ASGD is still untouched. One possible reason is that SVRG-SGD's iterative convergence rate, i.e., iteration count as the x-axis of the convergence curve, is much higher than that of IS-SGD. However, in practical deployments, the absolute convergence rate, i.e., wall-clock as the x-axis of the convergence curve, holds the actual significance. Unfortunately, previous works validated SVRG-styled ASGD with small-scale and relative dense datasets, in which its drawbacks on the absolute convergence rate for large-scale sparse datasets are not revealed.

\subsection{Absolute Convergence Acceleration: Sparsity and Performance}
After intensive evaluations of the existing SVRG-ASGD algorithms, we found that its absolute convergence rate is severely limited when dealing with large-scale sparse datasets, which is, unfortunately, de facto type of datasets that ASGDs are supposed to work with. See Algorithm 1 for the generic scheme of SVRG-ASGD, two bottlenecking issues decrease its absolute convergence rate drastically. They are caused by the same reason, i.e., SVRG is intrinsically dense.
\vskip 0.05in
\noindent\textbf{Sparsity for Less Computation} \: As can be seen in line 7, for each iteration of SVRG-ASGD, two additional vector adds, i.e., $\nabla f_{i_t}(s)$ and $\mu$ are needed. Intuitively, this increases the computation cost up to two times. However, the actual increase of the computation cost can be extremely large. See Figure 1 for illustration, we should be noted that in large-scale sparse optimization problems where ASGD is applied, stochastic gradient $\nabla f_{i_t}(w_t)$ is actually very sparse (as shown in the three upper rows) and is thus index-compressed, i.e., only the non-zeros features are stored with their corresponding indices. The update of $w_t$ is thus proceeded in an index-compressed way and the add operation is actually executed very few times, e.g., $10^{-7}$ * $d$, comparing to the dimensionality $d$. For sparse datasets with dimensions in tens of millions (which is not rarely seen in modern optimizations), index-compression of the sparse gradient is the most efficient method for ASGDs.
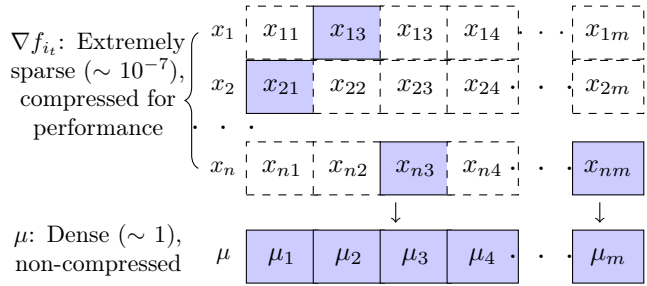
\begin{figure*}
	\centering
	\caption{Data Compression for Performance}
	\begin{tikzpicture}[x=2.2cm, y=1.8cm]
	%\node (wa) [wa]  {System Combination};
	%\draw[] (0.5,-0.5) rectangle (4.4,-1.5);
	\node at(-0.55,0.5) [force]   {$x_1$ };
	\node at(-0.55,0.1) [force]   {$x_2$ };
	\node at(-0.55,-0.5) [force]   {$x_n$ };
	%\node at(-0.63,-0.35) [force]   {Rearranged};
	
	%\node at(0.1,0.9) [force]   {1};
	%\node at(0.5,0.9) [force]   {2};
	%\node at(0.5,0.9) [force]   {3};
	%\node at(0.5,0.9) [force]   {4};
	\node at(-0.2, 0.5) (asr8) [datanoexist] {$x_{11}$};
	\node at(0.2, 0.5) (asr9) [dataexist] {$x_{13}$};
	\node at(0.6, 0.5) (asr10) [datanoexist] {$x_{13}$};
	\node at(1.0, 0.5) (asr11) [datanoexist] {$x_{14}$};
	\node at(1.35,0.5) [force]   {\large. . .};
	\node at(1.75, 0.5) (asr11) [datanoexist] {$x_{1m}$};
	
	\node at(-0.2, 0.1) (asr4) [dataexist] {$x_{21}$};
	\node at(0.2, 0.1) (asr5) [datanoexist] {$x_{22}$};
	\node at(0.6, 0.1) (asr6) [datanoexist] {$x_{23}$};
	\node at(1.0, 0.1) (asr7) [datanoexist] {$x_{24}$};
	\node at(1.35,0.1) [force]   {\LARGE. . .};
	\node at(1.75, 0.1) (asr11) [datanoexist] {$x_{2m}$};
	
	\node at(-0.55,-0.2) [force]   {\LARGE. . .};
	
	\node at(-0.2, -0.5) (asr4) [datanoexist] {$x_{n1}$};
	\node at(0.2, -0.5) (asr5) [datanoexist] {$x_{n2}$};
	\node at(0.6, -0.5) (asr6) [dataexist] {$x_{n3}$};
	\node at(1.0, -0.5) (asr7) [datanoexist] {$x_{n4}$};
	\node at(1.35,-0.5) [force]   {\LARGE. . .};
	\node at(1.75, -0.5) (asr11) [dataexist] {$x_{nm}$};
	%\node at(0.2, -0.3) (asr1) [data] {$x_1$\\$L1=1$};
	%\node at(0.8, -0.3) (asr2) [data] {$x_4$\\$L4=4$};
	%\node at(1.4, -0.3) (asr3) [data] {$x_2$\\$L2=2$};
	%\node at(2.0, -0.3) (asrn) [data] {$x_3$\\$L3=3$};
	
	%\node at(-1.3, -1.3) (n1) [n] {$node_1$};
	%\node at(0.48, -0.58) (n2) [n] {$node_1$};
	%\node at(0.7, -1.3) (n3) [n] {$node_{...}$};
	%\node at(1.7, -0.58) (n4) [n] {$node_2$};
	
	\draw [mbrace] (-0.7,0.5) -- (-0.7,-0.5);
	\node at(-1.3,-0.35) [ann]   {$\nabla f_{i_t}$: Extremely sparse ($\sim 10^{-7}$), compressed for performance};
	\node at(-1.3,-1.35) [ann]   {$\mu$: Dense ($\sim 1$), non-compressed};
	%\draw [mbrace] (-0.2,-0.7) -- (0.4,-0.7);
	%\draw [mbrace] (1.08,-0.1) -- (2.325,-0.1);
	%\draw [mbrace] (0.98,-0.7) -- (1.65,-0.7);
	\draw [->] (0.48,-0.75)   -- (0.48,-0.9);
	%\draw [->] (0.3,-0.8)   -- (-0.35,-1.12);
	\draw [->] (1.7,-0.75)   -- (1.7,-0.9);
	%\draw [->] (1.3,-0.8)   -- (1.7,-1.12);
	%node[bottomlabel]
	\node at(-0.55,-1.15) [force]   {$\mu$};
	
	\node at(-0.2, -1.15) (asr4) [dataexist] {${\mu_1}$};
	\node at(0.2,-1.15) (asr5) [dataexist] {${\mu_2}$};
	\node at(0.6,-1.15) (asr6) [dataexist] {${\mu_3}$};
	\node at(1.0, -1.15) (asr7) [dataexist] {${\mu_4}$};
	\node at(1.35,-1.15) [force]   {\LARGE. . .};
	\node at(1.75, -1.15) (asr11) [dataexist] {${\mu_m}$};
	%{12 fermions\\(+12 anti-fermions)\\increasing mass $\to$};
	\end{tikzpicture}
\end{figure*}
However, for SVRG-ASGD, due to the involvement of the historical true gradient $\mu$, which is in fact a dense gradient with size $d$ as shown in the last row in Figure 1, the index-compressions of $\mu$ is meaningless. The update of $w_t$ has to be proceeded in the form of raw vector with full length $d$, which is typically five to seven magnitudes larger than the index-compressed stochastic gradient. Adding arrays of such large magnitude at each iteration is completely impractical. In consideration of the large data sample counts, the training can be extremely time-consuming even it actually needs less iterations (SVRG accelerates ASGD in iteration). In fact, for modern stochastic optimizations where sparse datasets with extra-high dimensionality are common, we found that performing SVRG-ASGD is computationally infeasible and often fails to complete training in a reasonable time due to the drastically increased computation cost, which is caused by the loss of sparsity. 

For large datasets, when the true-gradient is $10^{3}$ magnitudes higher than $\nabla f_i$, the absolute convergence rate of SVRG-ASGD shows large net decrease. Unfortunately, for the previous SVRG-ASGD works, the absolute convergence results were conducted on relative low dimensionality (around $10^{2}\sim10^{5}$) datasets; or using larger datasets ($10^{7}$) but the comparison is limited between SVRGs. 

We also find that the (only) public version of SVRG-ASGD\footnote{\url{https://github.com/CMU-ML-17-102/svrg.git}, committed by the author of~\cite{Reddi}} does not follow the proposed algorithm in its corresponding literature~\cite{Reddi}. The public version actually omit the addition of dense gradient $\mu$ at each iteration (in line 7) and only do it once at the end of each epoch by multiplying $\mu$ with $n$. It seems that the intention of this approximation is to avoid the expensive dense gradient operation at each iteration. Unfortunately, we found the convergence curve of this public version far from the literature version.
%\vskip 0.02in

\noindent\textbf{Sparsity for Less Conflicts} \: It is commonly known that one fundamental assumption for ASGD is that the datasets are sufficiently sparse, otherwise the conflict updates for the global model would certainly raise the risk of non-convergence or inferior convergence curves. As a consequence, for SVRG-ASGD, the loss of sparsity due to the usage of dense true-gradient $\mu$ does not only increase the iterative time cost magnitudes higher but also increases the potentiality of conflict updates. Such conflicts weaken the benefits of using variance-reduced gradient, which can be deemed as another negative effect on the absolute convergence rate. 

Obviously, in oder to achieve the absolute converge rate acceleration of ASGD, a true-gradient-free VR algorithm has to be designed. Naturally, importance sampling as an elegant true-gradient-free VR algorithm comes to our consideration.

\subsection{IS-ASGD for Guaranteed Absolute Convergence Acceleration}
Clearly, when designing a VR algorithm to achieve absolute convergence acceleration for ASGD, we hope it not only remains a minimal increase of iterative time cost but also maintains low potentiality of conflict updates, which seems to be a difficult task. Fortunately, we notice that IS naturally suits in: it does not rely on the variance-reduced-gradient $v_t$ which makes it free from the true-gradient $\mu$ and thus the above mentioned performance-bottlenecking problems do not exist. In fact, IS can be implemented with no extra on-line computation by generating the sample sequences beforehand and let the computation threads iterate over the generated sequences, which leaves the computation kernel the same as ASGD. The calculation of the sampling distribution is typically fast which can actually be ignored comparing to the whole training time cost. That is, IS-ASGD is able to preserve almost the same iterative computation cost and low conflict updates with ASGD while achieving a higher iterative convergence rate. These are the key advantages for achieving a high absolute convergence rate acceleration. 

As mentioned above, since SVRG typically achieves much better performance on iterative convergence rate acceleration, research by far all focus on the SVRG-ASGD algorithms while the IS-styled ASGD algorithm is still left unstudied. We consider this missing field worthy to be researched due to its practical significance for high performance large-scale sparse optimizations and its novelty. Following this idea, we analyze and propose the algorithm that uses IS to accelerate the absolute convergence rate of ASGD effectively, i.e., IS-ASGD, as the novel contribution of this paper.

The rest of this paper is organized as follows. In section II we analyze the potential problems of applying IS in ASGD and propose IS-ASGD algorithm with detailed discussion. Section III is dedicated to the theoretical analysis of the convergence bound improvement of IS for ASGD. The in-depth evaluations of both iterative and absolute convergence results are provided in Section IV. Finally, the conclusion of this paper is given in Section V.
\section{Importance Sampling for Asynchronous SGD}
We first briefly introduce some key concepts of IS. Like most previous related literatures, we make the following necessary assumptions for the convergence analysis of the stochastic optimization problem studied in this paper.
\begin{itemize}
	\item \textbf{$\mu$-Convex}: $f_i$ is strongly convex with parameter $\mu$, that is:
	\begin{equation}
	\label{convexity}
	\langle x-y,\nabla f_i(x)-\nabla f_i(y) \rangle \ge \mu \|x-y\|_2^2, \, \forall x,y \in \mathbb{R}^d
	\end{equation}
	\item \textbf{$L_i$-Lipschitz}: Each $f_i$ is continuously differentiable and $\nabla f_i$ has Lipschitz constant $L_i$ w.r.t $\|\cdot\|_{2}$, i.e.,
	\begin{equation}
	\label{continuity}
	\|\nabla f_i(x)-\nabla f_i(y)\|_{2}\le L_i\|x-y\|_{2}, \, \forall x, y \in \mathbb{R}^d
	\end{equation}
\end{itemize}
where $\forall i \in \{1,2,...,N\}$.
\subsection{Importance Sampling}
Importance sampling reduces the gradient variance through a non-uniform sampling procedure instead of drawing sample uniformly. For conventional stochastic optimization algorithms, the sampling probability of $i$-th sample at $t$-th iteration, denoted by $p_i^t$, always equals to $1/N$ while in an IS scheme, $p_i^t$ is endowed with an importance factor $I_i^t$ and thus the $i$-th sample is sampled at $t$-th iteration with respect to a weighted probability:
\begin{equation}
p^{t}_i=I^t_i/N, \quad s.t. \sum_{i=1}^{N}p_i^t=1
\end{equation}
where $N$ is the number of training samples. With this non-uniform sampling procedure, to obtain an unbiased expectation, the update of $w$ is modified as:
\begin{equation}
\label{up}
w_{t+1} = w_t-\frac{\lambda}{np^t_{i_t}}\nabla f_{i_t}(w_t)
\end{equation}
where $i_t$ is drawn i.i.d w.r.t the weighted sampling probability distribution $P^t=\{p_i^t\}, \forall i\in \{1,2,...,N\}$.
\subsection{Importance Sampling for Variance Reduction} Recall the optimization problem in Equation \ref{opt}, using the analysis result from \cite{p_zhao}, we have the following lemma:
\begin{lemma}
	Let $\sigma^2=\mathbb{E}\|\nabla f_i(w_{\star})\|_2^2$ where $w_{\star}=\arg\underset{w}{\min}F(w)$. Set $\lambda \le \frac{1}{\mu}$, with the update scheme defined in Equation \ref{up}, the following inequality satisfy:
	\begin{equation}
	\begin{aligned}
	\mathbb{E}[F(w_{t+1})&-F(w_{\star})]\le \frac{1}{2\lambda}\mathbb{E}[\|w_{\star}-w_t\|_2^2-\|w_{\star}-w_{t+1}\|_2^2]\\ 
	&-\mu\mathbb{E}\|w_{\star}-w_t\|_2^2+\frac{\lambda_t}{\mu}\mathbb{E}\mathbb{V}\Big((np_{i_t}^t)^{-1}\nabla f_{i_t}(w_t)\Big)
	\end{aligned}
	\end{equation}
	where the variance is defined as:
	\begin{equation}
	\mathbb{V}\big((np^t_{i_t})^{-1}\nabla f_{i_t}(w_t)\big)=\mathbb{E}\|(np^t_{i_t})^{-1}\nabla f_{i_t}(w_t)-\nabla F(w_t)\|_2^2
	\end{equation}
	and the expectation is estimated w.r.t distribution $P^t$.
\end{lemma}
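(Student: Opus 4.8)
The plan is to reproduce the per-iteration descent inequality of \cite{p_zhao}, adapted to the reweighted update \eqref{up}. I would first isolate the single property that makes importance sampling work: the rescaled stochastic gradient $g_t := (np_{i_t}^t)^{-1}\nabla f_{i_t}(w_t)$ is an \emph{unbiased} estimator of the full gradient. Taking the expectation over $i_t\sim P^t$,
\begin{equation}
\mathbb{E}[g_t]=\sum_{i=1}^{N} p_i^t\cdot\frac{1}{np_i^t}\nabla f_i(w_t)=\frac{1}{n}\sum_{i=1}^{N}\nabla f_i(w_t)=\nabla F(w_t),
\end{equation}
so the factor $(np_{i_t}^t)^{-1}$ exactly cancels the nonuniform sampling bias. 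This is why the whole uniform-sampling argument carries over, and why the quantity $(np_{i_t}^t)^{-1}\nabla f_{i_t}(w_t)$ is precisely what appears inside the variance in the statement.

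Next I would expand the squared distance to the optimizer. Writing $w_{t+1}=w_t-\lambda g_t$ and recalling $\nabla F(w_\star)=0$,
\begin{equation}
\|w_{t+1}-w_\star\|_2^2=\|w_t-w_\star\|_2^2-2\lambda\langle g_t,w_t-w_\star\rangle+\lambda^2\|g_t\|_2^2 .
\end{equation}
Taking expectation under $P^t$, replacing $\mathbb{E}\langle g_t,\cdot\rangle$ by $\langle\nabla F(w_t),\cdot\rangle$ via unbiasedness, and applying the bias--variance split $\mathbb{E}\|g_t\|_2^2=\|\nabla F(w_t)\|_2^2+\mathbb{V}(g_t)$ isolates exactly the variance term of the lemma. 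Rearranging for the inner product yields
\begin{equation}
\langle\nabla F(w_t),w_t-w_\star\rangle=\frac{1}{2\lambda}\bigl(\|w_t-w_\star\|_2^2-\mathbb{E}\|w_{t+1}-w_\star\|_2^2\bigr)+\frac{\lambda}{2}\|\nabla F(w_t)\|_2^2+\frac{\lambda}{2}\mathbb{V}(g_t).
\end{equation}
I would then lower-bound the same inner product by strong convexity \eqref{convexity}, which (summed over $i$) gives $F$ the same parameter $\mu$, namely $\langle\nabla F(w_t),w_t-w_\star\rangle\ge F(w_t)-F(w_\star)+\tfrac{\mu}{2}\|w_t-w_\star\|_2^2$, and substitute to make $F(w_t)-F(w_\star)$ surface on the left, producing the telescoping distance pair with coefficient $1/2\lambda$.

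The delicate part, and the step I expect to be the main obstacle, is matching the exact constants in the statement: the distance coefficient $-\mu$ (rather than $-\mu/2$) and the variance coefficient $\lambda_t/\mu$. Both must come from eliminating the residual $\tfrac{\lambda}{2}\|\nabla F(w_t)\|_2^2$ term. Since $\nabla F(w_\star)=0$, the $L_i$-Lipschitz assumption \eqref{continuity} yields co-coercivity for $F$, so $\|\nabla F(w_t)\|_2^2=\|\nabla F(w_t)-\nabla F(w_\star)\|_2^2$ can be bounded either by the curvature quadratic or by a multiple of $F(w_t)-F(w_\star)$; combining this with the step-size restriction $\lambda\le 1/\mu$ is what should fold the gradient-norm term into the strong-convexity quadratic and rescale the variance coefficient into the claimed $\lambda_t/\mu$. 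The remaining bookkeeping is routine once that term is controlled, including one further use of smoothness to pass from $w_t$ to $w_{t+1}$ on the left-hand side. I would close by observing that every inequality holds in conditional expectation given $w_t$, so taking total expectation delivers the stated bound.
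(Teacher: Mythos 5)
The paper does not actually prove this lemma: it is imported (with some notational drift) from \cite{p_zhao}, introduced only by ``using the analysis result from [p\_zhao], we have the following lemma,'' so there is no in-paper argument to compare against. Judged on its own, your outline is the standard and correct route and is essentially the argument of the cited source: unbiasedness of $g_t=(np_{i_t}^t)^{-1}\nabla f_{i_t}(w_t)$, expansion of $\|w_{t+1}-w_\star\|_2^2$, the bias--variance split $\mathbb{E}\|g_t\|_2^2=\|\nabla F(w_t)\|_2^2+\mathbb{V}(g_t)$, and strong convexity to surface $F(w_t)-F(w_\star)$. Everything up to that point is right.

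The genuine gap is exactly where you flag it, and the repair you propose does not close it. Your derivation lands on $F(w_t)-F(w_\star)\le \frac{1}{2\lambda}\bigl(\|w_t-w_\star\|_2^2-\mathbb{E}\|w_{t+1}-w_\star\|_2^2\bigr)-\frac{\mu}{2}\|w_t-w_\star\|_2^2+\frac{\lambda}{2}\|\nabla F(w_t)\|_2^2+\frac{\lambda}{2}\mathbb{V}(g_t)$, after which you still need to (i) replace $F(w_t)$ by $F(w_{t+1})$, (ii) eliminate the residual $\frac{\lambda}{2}\|\nabla F(w_t)\|_2^2$, and (iii) reach the coefficients $-\mu$ and $\lambda_t/\mu$. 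Co-coercivity gives $\|\nabla F(w_t)\|_2^2\le 2L\,(F(w_t)-F(w_\star))$ with $L$ a smoothness constant, so folding that term into the left-hand side requires a restriction of the form $\lambda\lesssim 1/L$; the lemma's hypothesis $\lambda\le 1/\mu$ is strictly weaker, since $\mu\le L$, so the step-size condition you invoke is not the one that makes this absorption work. In \cite{p_zhao} the corresponding lemma is proved for a proximal update, where the optimality condition of the prox step (rather than co-coercivity) is what produces the function value at the \emph{new} iterate and controls the gradient-norm term. You should also be aware that part of the mismatch is the paper's, not yours: the statement mixes $\lambda$ and $\lambda_t$, and its constants ($-\mu$ rather than $-\mu/2$, $\lambda_t/\mu$ rather than an $O(\lambda)$ factor) do not cleanly match the cited source, so no amount of bookkeeping on your side will reproduce them exactly from the stated hypotheses alone.
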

It is easy to verify that in order to minimize the gradient variance, the optimal sampling probability $p_i^t$ should be set as:
\begin{equation}
\label{precise}
p_i^t=\frac{\|\nabla f_{i}(w_t)\|_2}{\sum_{j=1}^N\|\nabla f_{j}(w_t)\|_2}, \qquad \forall i \in \{1,2,...,N\}.
\end{equation}
Obviously, such iteratively re-estimation of $P^t$ is completely impractical. The authors propose to use the supremum of $\|\nabla f_{i}(w_t)\|_2$ as an approximation. Since we have $L_i$-Lipschitz of $\nabla f_i$, by further assuming $\|w_t\| \le R$ for any $t$, we have $\|\nabla f_{i}(w_t)\|_2 \le RL_i$, i.e., $\sup\|\nabla f_{i}(w_t)\|_2=RL_i$. Thus the actual sampling probability of $p_i$ is calculated as:
\begin{equation}
\label{actual}
p_i=\frac{L_i}{\sum_{j=1}^{N}L_j}, \, \forall i \in\left\{1,2,...,N\right\}.
\end{equation} 
With such definition, $P$ needs no update and is used throughout the training procedure. The authors further prove that with Equation~\ref{actual}, IS accelerated SGD achieves a convergence bound as:
\begin{equation}
\frac{1}{T}\sum_{t=1}^{T}\mathbb{E}[F(w_t)-F(w_{\star})]\le \sqrt{\frac{\|w_{\star}-w_0\|_2^2}{\sigma}}\left(\frac{\sum_{i=1}^{n}L_i}{n}\right)\frac{1}{T},
\end{equation}
while for standard SGD solver that actually samples $x_i$ w.r.t uniform distribution, the convergence bound is:
\begin{equation}
\frac{1}{T}\sum_{t=1}^{T}\mathbb{E}[F(w_t)-F(w_{\star})]\le \sqrt{\frac{\|w_{\star}-w_0\|_2^2\sum_{i=1}^{n}(L_i^2)}{\sigma n}} \frac{1}{T},
\end{equation}
when $\lambda$ is set as $\sqrt{\sigma\|w_{\star}-w_0\|_2^2}/\left(\frac{\sum_{i=1}^{n}L_i}{n}\sqrt{T}\right)$. According to Cauchy-Schwarz inequality, we always have $\frac{(\sum_{i=1}^{n}L_i)^2}{n\sum_{i=1}^{n}L_i^2} \le 1$, which implies that IS does improve convergence bound. Denote
\begin{equation}
\psi=\frac{(\sum_{i=1}^{n}L_i)^2}{\sum_{i=1}^{n}(L_i^2)},
\end{equation}
we can conclude that the improvement gets larger when $\psi \ll n$. 
\begin{algorithm}[t]
	\caption{Practical Importance Sampling for SGD}
	\begin{algorithmic}[1]
		\Procedure{IS-SGD}{$T$}
		\State Construct $P=\{p_i\}_{i=1}^N$ According to Equation \ref{actual}
		\State Generate Sample Sequence $S$ w.r.t distribution $P$.
		\For{$i=0;i\not=T;i$++}
		\State $i_t=S[i]$
		\State $w_{t+1} = w_t-\frac{\lambda}{np_{i_t}}\nabla f_{i_t}(w_t)$
		\EndFor\label{IS-practical}
		\EndProcedure
	\end{algorithmic}
\end{algorithm}

For example, for L2-regularized SVM optimization problem with squared hinge loss, i.e., $f_i(w)=(\lfloor 1-y_iw^{T}x_i \rfloor_{+})^2+\frac{\lambda}{2}\|w\|_2^2$, where $x_i$ is the $i$-th sample and $y_i \in \{-1,+1\} $ is the corresponding label, $\|\nabla f_{i}(w)\|_{2}$ can be bounded as 
\begin{equation}
\|\nabla f_{i}(w)\|_{2} \le 2(1+\|x_{i}\|_{2}/\sqrt{\lambda})\|x_{i}\|_{2}+\sqrt{\lambda}
\end{equation}
%That is, we have the Lipschitz constant as 
%\begin{equation}
%L_i=2(1+\|x_{i}\|_{2}/\sqrt{\lambda})\|x_{i}\|_{2}+\sqrt{\lambda}
%\end{equation}
%$L_i=2(1+\|x_{i}\|_{2}/\sqrt{\lambda})\|x_{i}\|_{2}+\sqrt{\lambda}$.
The pseudo code of practical IS-SGD algorithm is shown in Algorithm 2. As can be seen that, the core procedure of IS is the construction of sampling distribution $P$. Once $P$ is constructed, IS-SGD works as same as SGD except that the training samples are selected w.r.t to weighted probability distribution $P$ and the step-size is adjusted with ${1}/{np_i}$. It is clear that IS-ASGD does not rely on the true gradient $\mu$, which makes it free from the bottlenecking issues that deteriorate the absolute convergence rate of SVRG-ASGD drastically. This means that IS as an effective VR technique is very suitable for ASGD solvers with large-scale sparse datasets.
\subsection{Importance Imbalance}
In most ASGD implementations that solve large-scale optimization problems, each training thread/process runs on its corresponding core/node and typically works on its local dataset for the sake of performance and scalability. For IS-ASGD, such data-segmentation brings in problem since each thread/process (indexed with $i$) can only calculate the sampling probability distribution $P_i$ based on its local dataset instead of the whole dataset, which leads to sub-optimal VR performance of IS for ASGD. See Figure~\ref{imbc} for illustration, assume we have two working cores and whole training dataset as:
\begin{equation}
\mathcal{D} = \{x_1, x_2, x_3, x_4\}
\end{equation}
with subset $\mathcal{D}_1=\{x_1, x_2\}$ located on core/node 1 while $\mathcal{D}_2=\{x_3, x_4\}$ located on core/node 2. Without loss of generality, we further assume their corresponding Lipschitz constants as $\{1, 2, 3, 4\}$. For comparison, in IS-SGD where the only training process works on the whole dataset, the probability distribution of being chosen as the training sample is $P=\{p_1=0.1, p_2=0.2, p_3=0.3, p_4=0.4\}$ while in IS-ASGD with local-data-training, the sampling probabilities are $P_1=\{p_1=0.33, p_2=0.67\}$ and $P_2=\{p_3=0.43, p_4=0.57\}$ respectively for each core/node. In global-data-training algorithm e.g., IS-SGD, $p_4$ is much larger (twice over) than $p_2$ while in IS-ASGD, $p_4$ is even smaller than $p_2$ which is a heavy distortion from the theoretical optimum.
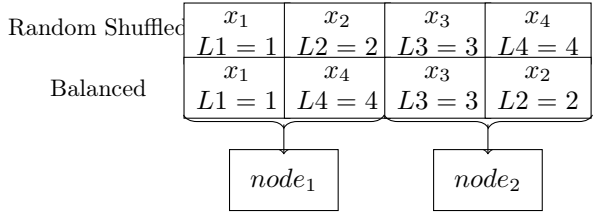
\begin{figure}[t]
	\centering
	\caption{Importance Balancing for ASGD}
	\begin{tikzpicture}[x=2.2cm, y=1.8cm]
	%\node (wa) [wa]  {System Combination};
	%\draw[] (0.5,-0.5) rectangle (4.4,-1.5);
	\node at(-0.63,0.55) [force]   {Random Shuffled};
	\node at(-0.63,0.1) [force]   {Balanced};
	%\node at(-0.63,-0.35) [force]   {Rearranged};
	
	\node at(0.2, 0.5) (asr8) [data] {$x_1$\\$L1=1$};
	\node at(0.8, 0.5) (asr9) [data] {$x_2$\\$L2=2$};
	\node at(1.4, 0.5) (asr10) [data] {$x_3$\\$L3=3$};
	\node at(2.0, 0.5) (asr11) [data] {$x_4$\\$L4=4$};
	
	\node at(0.2, 0.1) (asr4) [data] {$x_1$\\$L1=1$};
	\node at(0.8, 0.1) (asr5) [data] {$x_4$\\$L4=4$};
	\node at(1.4, 0.1) (asr6) [data] {$x_3$\\$L3=3$};
	\node at(2.0, 0.1) (asr7) [data] {$x_2$\\$L2=2$};
	
	%\node at(0.2, -0.3) (asr1) [data] {$x_1$\\$L1=1$};
	%\node at(0.8, -0.3) (asr2) [data] {$x_4$\\$L4=4$};
	%\node at(1.4, -0.3) (asr3) [data] {$x_2$\\$L2=2$};
	%\node at(2.0, -0.3) (asrn) [data] {$x_3$\\$L3=3$};
	
	%\node at(-1.3, -1.3) (n1) [n] {$node_1$};
	\node at(0.48, -0.58) (n2) [n] {$node_1$};
	%\node at(0.7, -1.3) (n3) [n] {$node_{...}$};
	\node at(1.7, -0.58) (n4) [n] {$node_2$};
	
	\draw [mbrace] (-0.12,-0.1) -- (1.08,-0.1);
	%\draw [mbrace] (-0.2,-0.7) -- (0.4,-0.7);
	\draw [mbrace] (1.08,-0.1) -- (2.325,-0.1);
	%\draw [mbrace] (0.98,-0.7) -- (1.65,-0.7);
	\draw [->] (0.48,-0.17)   -- (0.48,-0.39);
	%\draw [->] (0.3,-0.8)   -- (-0.35,-1.12);
	\draw [->] (1.7,-0.17)   -- (1.7,-0.39);
	%\draw [->] (1.3,-0.8)   -- (1.7,-1.12);
	%node[bottomlabel]
	%{12 fermions\\(+12 anti-fermions)\\increasing mass $\to$};
	\end{tikzpicture}
	\label{imbc}
\end{figure}
\begin{algorithm}[t]
	\caption{Importance\_Balancing}\label{datareorg}
	\begin{algorithmic}[1]
		\Procedure{Importance\_Balancing}{$\mathcal{D}$}
		\State Calculate $L=\{L_i\} \quad \forall i\in\{1,2,...,N\}$
		\State $\mathcal{D}_s=$Get\_Sorted\_Indices$(\mathcal{D})$ w.r.t $L$
		\For{$i=0,idx=0;i<n/2;i$++}
		\State $\mathcal{D}_{r}[idx++]=\mathcal{D}_s[i]$
		\State $\mathcal{D}_{r}[idx++]=\mathcal{D}_s[n-1-i]$
		\EndFor
		\State \textbf{if} $n\%2$ \textbf{then}
		\State \ \ \ \ \ \ $\mathcal{D}_{r}[idx]=\mathcal{D}_s[n/2]$	
		\State Return $\mathcal{D}_{r}$, $L$
		\EndProcedure
	\end{algorithmic}
\end{algorithm}
\subsection{Importance Balancing for IS-ASGD}
To reduce such imbalance, a rearrangement of the dataset before dividing/dispatching data segments to its corresponding core/node should be considered. See the second row of Figure~\ref{imbc} for illustration, to achieve a balanced importance segmentation, we design a simple balancing algorithm as shown in Algorithm~\ref{datareorg}. As can be seen that this procedure generates rearranged dataset indices $\mathcal{D}_r$ by locating $\mathcal{D}_s[i]$ and $\mathcal{D}_s[n-1-i]$ index together sequentially. Denote $\Phi_a$ as the importance sum of core/node $a$:
\begin{equation}
\Phi_a=\sum_{i=1}^{N_a}L_{a_i}
\end{equation} 
where $L_{a_i}$ is the Lipschitz constant of the $i$-th data sample on core/node $a$, $N_a$ is the number of data samples on core/node $a$. According to Equation \ref{actual}, we have the sampling probability of $i$-th data sample on core/node $a$ as $p_{a_i}=\frac{L_{a_i}}{\Phi_a}, \forall i \in\left\{1,2,...,N_a\right\}$. It is easy to prove that by satisfying
\begin{equation}
\Phi_a = \Phi_b \quad \forall a,b \in\{1,...,num_T\}
\end{equation}
where $num_T$ is the number of cores/nodes, then the importance imbalance is eliminated. We call this dataset rearrangement procedure as importance balancing.
Obviously Algorithm~\ref{datareorg} does not guarantee to produce an equal-importance dataset segmentation. However, segmenting dataset into certain number (e.g. $num_T$) of equal-importance subsets is a typical NP-hard problem which can not be solved easily. We still use this simple head-tail sequential matching procedure since it is a fast approximation and generally works well in practice. 

Meanwhile, it has to be pointed out that if the distribution of the Lipschitz constants closes to uniform distribution or the dataset is sufficiently large, a random shuffling would work just fine for IS to perform VR since the risk of severe importance imbalance is low. We empirically define a metric $\rho$, which measures the potential of the imbalance to some extent:
\begin{equation}
\rho = \frac{\sum_{i=1}^{N}(L_i-\mu)^2}{N},
\end{equation}
where $\mu=\sum_{i=1}^{N}L_i/N$. A lower $\rho$ indicates lower potential of severe importance imbalance and vice versa. Accordingly, IS-ASGD is designed to perform importance balancing in an adaptive manner depending on $\rho$. The pseudo code of IS-ASGD is shown in Algorithm~\ref{IS-ASGD}, $\zeta$ is empirically set as $5^{-4}$.

\section{Convergence Analysis of IS-ASGD}
Among the many analysis of the convergence bound of ASGD, Horia et al. model ASGD as SGD with perturbed inputs i.e., the inconsistent state of the model is treated as true model with noise added. Comparing to other convergence analysis, this scheme is more general, compact and most importantly, makes the analysis of the effect of IS in ASGD relative simple. We first give a brief introduction of the perturbed iterate analysis which serves as the base of our analysis. For the ease of analysis, we presume that the dataset is perfectly importance-balanced, i.e., $\Phi_a = \Phi_b$, $\forall a,b \in\{0,1,...,num_T\}$, that is, IS achieves its theoretical convergence bound proved in previous literatures.
\begin{algorithm}[t]
	\caption{IS-ASGD Algorithm}\label{IS-ASGD}
	\begin{algorithmic}[1]
		\Procedure{IS-ASGD}{$num_T$, $T$, $\mathcal{D}$}
		\State Calculate $\rho$
		\If {$\rho \le \zeta$}
		\State $\mathcal{D}_r$, $L$=Importance\_Balancing($\mathcal{D}$)
		\Else
		\State $\mathcal{D}_r$, $L$=Random\_Shuffling($\mathcal{D}$)
		\EndIf
		\State {\bf Parallel do} with $num_T$
		\Indent
		\State $tid = $ GetTid()
		\State $\mathcal{D}_{tid}=\mathcal{D}_{r}[\frac{n*tid}{num_T}:\frac{n*(tid+1)}{num_T}]$
		\State $L_{tid}=L[\mathcal{D}_{tid}]$
		\State Calculate $P_{tid}$ w.r.t $L_{tid}$
		\State Generate Local Sample Sequence $S_{tid}$ w.r.t $P_{tid}$
		%\For {$n=0;n\not=num_E;n$++}
		\For{$i=0;i\not=T;i$++}
		\State $i_t=\mathcal{D}_{tid}[S_{tid}[i]]$
		\State $w_{t+1} = w_t-\frac{\lambda}{np_{i_t}}\nabla f_{i_t}(w_t)$
		%\EndFor
		\EndFor
		\EndIndent
		\State \textbf{return}
		\EndProcedure
	\end{algorithmic}
\end{algorithm}
\subsection{Perturbed Iterate Analysis}
In perturbed iterate analysis \cite{mania2017perturbed}, the update of $w_t$ is modeled as:
\begin{equation}
w_{t+1} = w_t-\lambda \nabla f_{i_t}(w_t+\theta_t)
\end{equation}
where $\theta_t$ is the asynchrony error term caused by lock-free update at iteration $t$. Let $\hat{w_t}=w_t+\theta_t$. We have:
\begin{equation}
\begin{aligned}
\label{def}
\|w_{t+1}-w_{\star}\|^2_{2}=&\|w_t-\lambda \nabla f_{i_t}(\hat{w_t})-w_{\star}\|^2_{2}\\
=&\|w_t-w_{\star}\|_2^2-2\lambda \langle \hat{w_t}-w_{\star},\nabla f_{i_t}(\hat{w_t})\rangle+\\
&\lambda^2\|\nabla f_{i_t}(\hat{w_t})\|_2^2+2\lambda \langle \hat{w_t}-w_t,\nabla f_{i_t}(\hat{w_t})\rangle
\end{aligned}
\end{equation}
Recall the convexity assumption i.e., $f_i$ is strongly convex with parameter $\mu$, we have:
\begin{equation}
\begin{aligned}
\label{ineq}
\langle \hat{w_t}-w_{\star}, \nabla f_{i_t}(\hat{w_t}) \rangle \ge \mu \|\hat{w_t}-w_{\star}\|_2^2 \\
\mu\|\hat{w_t}-w_{\star}\|_2^2 \ge  \frac{\mu}{2} \|w_t-w_{\star}\|_2^2-\mu \|\hat{w_t}-w_t\|_2^2
\end{aligned}
\end{equation}
Denote by $\epsilon_t$ the relative error of $\hat{w_t}$, i.e, $\mathbb{E}\|\hat{w}_t-w_{\star}\|_2^2$. By substituting Equation \ref{ineq} back to \ref{def}, we obtain:
\begin{equation}
\begin{aligned}
\label{bound}
\epsilon_{t+1} \le (1-\lambda\mu)\epsilon_t&+\lambda^2\underbrace{\mathbb{E}\|\nabla f_{i_t}(\hat{w_t})\|_2^2}_{R_0^t}+2\lambda\mu\underbrace{\mathbb{E}\|\hat{w_t}-w_t\|_2^2}_{R_1^t}\\
&+2\lambda\underbrace{\mathbb{E}\langle\hat{w_t}-w_t,\nabla f_{i_t}(\hat{w_t})\rangle}_{R_2^t}
\end{aligned}
\end{equation}
Among the three labeled terms, notice that $R_0^t$ is a common term that exists in both SGD and ASGD while $R_1^t$ and $R_2^t$ are additional error terms introduced by the {\it inconsistency} of the model. $R_1^t$ reflects the difference between the true model and the perturbed (noise added) one, and $R_2^t$ measures the projection of such noise on the gradient of each iteration. Now that the convergence bound can be obtained once $R_0^t$, $R_1^t$ and $R_2^t$ are bounded. The authors first bound $\mathbb{E}\|\nabla f_{i_t}(\hat{w_t})\|_2\le M$, i.e., $R_0^t \le M^2$. Next, to bound $R_1^t$ and $R_2^t$, the concept of {\it conflict graph} is introduced as the following.
\paragraph{Conflict graph}
Denote by $c_i \subseteq \{j\}_{j=0}^d$ the set of feature index of data sample $x_i$, i.e., $j \in c_i$ only if the $j$-th feature is provided in $x_i$. In a conflict graph $G=\{e_{ij},v_i\}$, $i,j \in \{0, 1, ...,n\},\ i\ne j$, vertices $v_i$ and $v_j$ are connected with edge $e_{ij}$ if and only if $c_i \cap c_j \ne \varnothing$. Further define two factors that reflect the extent of conflict update: 
\begin{itemize}
	\item {\bf Delay parameter}, $\tau$, i.e., the maximum lag between when a gradient is computed and
	when it is applied to $w$. It is assumed that $\tau$ is linearly related to the concurrency.
	\item {\bf Conflict parameter}, $\bar{\Delta}$, which is the average degree of the conflict graph $G$, obviously, datasets with higher $\bar{\Delta}$ suffers severer extent of conflict updates and vice versa.
\end{itemize}
These two parameters measure the extent of inconsistency from two aspects. $\tau$ is set as the proxy of concurrency of ASGD which can be controlled by the users while $\bar{\Delta}$ measures the intrinsic potentials of conflict update of dataset which is irrelevant to the algorithm's settings. The authors prove that $R_1^t$ is bounded as $R_1^t \le \lambda^2M^2\Big(2\tau+8\tau^2\frac{\bar{\Delta}}{n}\Big)$ and $R_2^t$ bounded as $R_2^t \le 4\lambda M^2 \tau\frac{\bar{\Delta}}{n}$. Thus the recursion can be obtained by plugging $R_1^t$ and $R_2^t$ back to Equation \ref{bound}:
\begin{equation}
\label{asgd}
\begin{aligned}
\epsilon_{t+1} \le &\underbrace{(1-\lambda\mu)\epsilon_t+\lambda^2M^2}_{\xi}\\
&+\underbrace{\lambda^2M^2\Big(8\tau\frac{\bar{\Delta}}{n}+4\lambda\mu\tau+16\lambda\mu\tau^2\frac{\bar{\Delta}}{n}}_{\delta}\Big)
\end{aligned}
\end{equation}
\subsection{Bounding IS-ASGD}
Now the recursion of $\epsilon_{t}$ is divided into two parts, i.e., accurate SGD term $\xi$ and noise term $\delta$. With such scheme of modeling, the difficulty of the analysis caused by the inconsistency can be greatly simplified. We thus have the following lemma that bounds IS-ASGD.
\begin{lemma}
	\label{cvg}
	For IS-ASGD algorithm that follows the scheme of Algorithm~\ref{IS-ASGD}, by satisfying the convexity and continuity conditions in Equation \ref{convexity} and Equation \ref{continuity}. Denote by $\sigma$ the residual, i.e., $\mathbb{E}\|\nabla f_i(w_{\star})\|_2$, with a proper stepsize as $\lambda=\epsilon\mu/(2\epsilon\mu\sup L+2\sigma^2)$, the iteration steps $k$ which is sufficient to achieve $\mathbb{E}\|w_k-w_{\star}\|_2^2 \le \epsilon$, is defined as:
	\begin{equation}
	k= O(1) \log(\epsilon_0/\epsilon)\left(\frac{\bar{L}}{\mu}+\frac{\bar{L}}{\inf L}\frac{\sigma^2}{\mu^2\epsilon}\right)
	\end{equation}
	when $\tau$ is bounded as 
	\begin{equation}
	O\Big(\min\Big\{n/\bar{\Delta},\frac{\epsilon\mu\sup L+\sigma^2}{\epsilon\mu^2}\Big\}\Big)
	\end{equation}	
	where $\epsilon_0:=\max_{0\le t\le T}\mathbb{E}\|\hat{w_t}-w_{\star}\|_2^2$.
\end{lemma}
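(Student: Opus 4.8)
The plan is to re-run the perturbed-iterate analysis of Equation~\ref{asgd}, but with the gradient oracle replaced by the importance-reweighted estimator $g_t := \frac{1}{np_{i_t}}\nabla f_{i_t}(\hat w_t)$ used in Algorithm~\ref{IS-ASGD}, and to track how the sampling weights $p_i=L_i/(n\bar L)$ (with $\bar L=\frac1n\sum_i L_i$) sharpen every error term. The single structural gain from IS is that it trades the uniform worst-case second-moment bound $R_0^t\le M^2$, whose natural size is governed by $\sup_i L_i$, for a bound governed by the \emph{average} Lipschitz constant $\bar L$. First I would redo the expansion of $\|w_{t+1}-w_\star\|_2^2$ exactly as in Equation~\ref{def} with $g_t$ in place of $\nabla f_{i_t}(\hat w_t)$; since $\mathbb{E}[g_t\mid\hat w_t]=\nabla F(\hat w_t)$ the reweighting is unbiased, so the strong-convexity step of Equation~\ref{ineq} carries over verbatim and leaves the same skeleton $\epsilon_{t+1}\le(1-\lambda\mu)\epsilon_t+\lambda^2R_0^t+2\lambda\mu R_1^t+2\lambda R_2^t$.

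The crux is the new bound on $R_0^t$. With $p_i=L_i/(n\bar L)$,
\begin{equation}
R_0^t=\sum_{i}p_i\Big\|\tfrac{1}{np_i}\nabla f_i(\hat w_t)\Big\|_2^2=\frac{\bar L}{n}\sum_{i}\frac{\|\nabla f_i(\hat w_t)\|_2^2}{L_i},
\end{equation}
and splitting $\|\nabla f_i(\hat w_t)\|_2\le L_i\|\hat w_t-w_\star\|_2+\|\nabla f_i(w_\star)\|_2$ via $L_i$-Lipschitzness and the triangle inequality gives
\begin{equation}
R_0^t\le 2\bar L^2\,\epsilon_t+\frac{2\bar L}{\inf L}\,\sigma^2,
\end{equation}
where $\sigma^2$ denotes the residual second moment $\mathbb{E}\|\nabla f_i(w_\star)\|_2^2$. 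This is exactly the place where IS pays off: the noise floor is scaled by $\bar L/\inf L$ rather than by a $\sup L$-type factor, while the contraction-coupled piece carries $\bar L^2$. I would then re-derive the asynchrony terms $R_1^t,R_2^t$ along the conflict-graph argument recalled before Equation~\ref{asgd}, substituting this reweighted second moment for $M^2$, which reproduces the $\tau$ and $\bar\Delta/n$ dependence but now multiplied by the $\bar L$-scaled quantities.

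Substituting these bounds collapses the recursion into the linear form $\epsilon_{t+1}\le(1-a)\epsilon_t+b$ with contraction rate $a=\Theta(\lambda\mu)$ (after checking $2\lambda\bar L^2\le\mu$, which the stated stepsize enforces) and noise floor $b$ comprising the IS term $\lambda^2\frac{2\bar L}{\inf L}\sigma^2$ plus the asynchrony terms proportional to $\tau\bar\Delta/n$. I would then pick $\lambda=\epsilon\mu/(2\epsilon\mu\sup L+2\sigma^2)$ so that the IS fixed point $b/a$ sits at level $\epsilon$, and bound $\tau=O(\min\{n/\bar\Delta,(\epsilon\mu\sup L+\sigma^2)/(\epsilon\mu^2)\})$ so that the two asynchrony contributions are dominated by the same $\epsilon$-level floor: the first factor keeps the conflict graph effectively sparse, $\tau\bar\Delta/n=O(1)$, and the second matches the $1/(\lambda\mu)$ scaling. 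Unrolling $\epsilon_k\le(1-a)^k\epsilon_0+b/a$ and requiring $(1-a)^k\epsilon_0\le\epsilon$ yields $k=O(1)\,a^{-1}\log(\epsilon_0/\epsilon)$; substituting $a^{-1}=\Theta(1/(\lambda\mu))=\Theta(\bar L/\mu+\frac{\bar L}{\inf L}\frac{\sigma^2}{\mu^2\epsilon})$ gives the claimed $k$.

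The main obstacle I anticipate is the third step: faithfully pushing the $1/(np_{i_t})$ reweighting through the conflict-graph bounds on $R_1^t$ and $R_2^t$, which in \cite{mania2017perturbed} were proved for uniform sampling and a single scalar bound $M$. I must verify that the combinatorial counting of conflicting coordinates still goes through when each sampled gradient is rescaled by the data-dependent factor $\bar L/L_{i_t}$, i.e., that the per-coordinate delay interactions can be controlled by the reweighted second moment rather than by $M$. Getting the constants to line up so that both asynchrony terms land below the $\epsilon$-floor under the stated $\tau$ budget is the delicate part, whereas the remaining stepsize tuning and the geometric unrolling are routine.
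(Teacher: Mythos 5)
Your proposal follows essentially the same route as the paper: the perturbed-iterate recursion is split into the serial SGD term and the asynchrony noise term, importance sampling replaces the $\sup L$ dependence by $\bar{L}$ (with the noise floor scaled by $\bar{L}/\inf L$), and the asynchrony terms are forced down to an order-wise constant by the stated bounds $\tau\bar{\Delta}/n=O(1)$ and $\tau\lambda\mu=O(1)$. The only difference is one of explicitness: you re-derive the reweighted second-moment bound $R_0^t\le 2\bar{L}^2\epsilon_t+\frac{2\bar{L}}{\inf L}\sigma^2$ from scratch where the paper simply cites the IS-SGD result of \cite{Needell}, and you correctly flag (as the paper does not) that the conflict-graph bounds on $R_1^t$ and $R_2^t$ must be re-verified under the data-dependent rescaling $\bar{L}/L_{i_t}$, a point the paper dispatches with the one-line substitution $M\mapsto M_s\le\frac{\bar{L}}{\inf L}M$.
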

\begin{proof}
	Using the analysis of \cite{Needell}, we know that for $\xi$, the convergence bound of SGD is obtained as
	\begin{equation}
	\label{bsgd}
	k = 2\log(\epsilon_0/\epsilon)\left(\frac{\sup L}{\mu}+\frac{\sigma^2}{\mu^2\epsilon}\right)
	\end{equation}
	when $\lambda = \epsilon\mu/(2\epsilon\mu\sup L+2\sigma^2)$ and the convergence bound of $\xi$ is further reduced from supremum dependence of $L$ to average dependence through the application of IS, i.e., 
	\begin{equation}
	\label{issgd}
	k = 2\log(\epsilon_0/\epsilon)\left(\frac{\bar{L}}{\mu}+\frac{\bar{L}}{\inf L}\frac{\sigma^2}{\mu^2\epsilon}\right)
	\end{equation}
	With the accurate SGD term $\xi$ bounded as Equation \ref{issgd}, we are left to bound the noise term $\delta$ as an order-wise constant in order to achieve nearly linear speedup of IS-SGD. 
	
	According to the definition of $p_i$ as shown in Equation \ref{actual}, we have $(np_i)^{-1}=\frac{\bar{L}}{L_i}$. Since $\nabla f_{i_t}$ is scaled with $(np_i)^{-1}$ in IS, $M$ is also scaled as $M_s:=(np_i)^{-1}M$. Since $\frac{\bar{L}}{L_i} \le \frac{\bar{L}}{\inf L}$, $M_s$ is thus bounded as:
	 \begin{equation}
	 \begin{aligned}
	M_s\le\frac{\bar{L}}{\inf L}M
	 \end{aligned}
	 \end{equation}
	  With this result, from Equation \ref{asgd}, it can be concluded that $\delta$ is bounded as an order-wise constant when the following conditions are satisfied:
	\begin{equation}
	\begin{aligned}
	\label{requirement}
	\tau\frac{\bar{\Delta}}{n}=O(1) \quad and \quad \tau\lambda\mu=O(1)
	\end{aligned}
	\end{equation}
	Considering that $\lambda$ is set as $\epsilon\mu/(2\epsilon\mu\sup L+2\sigma^2)$, Equation \ref{requirement} is thus satisfied by bounding $\tau$ as:
	\begin{equation}
	\label{bt}
	O\Big(\min\Big\{n/\bar{\Delta},\frac{\epsilon\mu\sup L+\sigma^2}{\epsilon\mu^2}\Big\}\Big)
	\end{equation}
	Thus the recursion of IS-ASGD is the same with IS-SGD plus an additional order-wise constant. We thus have the convergence bound of IS-ASGD as shown in Lemma 2.
\end{proof}
Obviously this bound inherits the superiority of IS-SGD over ASGD, and it shows that IS-ASGD achieves a nearly linear speedup of IS-SGD which is similar to the previous result in \cite{congfang} that shows SVRG-ASGD achieves nearly linear speedup of SVRG-SGD.

In brief, the key to the convergence bound analysis is the serialization of the asynchrony which divides the update scheme into two self-bounded terms, i.e., $\xi$ and $\delta$. Such separation makes the analysis much simpler, that is, IS decreases the convergence bound of $\xi$ as the same as in SGD while the two bounded error terms caused by the asynchrony, i.e., $R^t_1$, $R^t_2$, increase the convergence bound up to a constant when certain conditions are met.

\section{Experimental Results}
In order to make our evaluation representative and convincing, we conduct the evaluation based on the following configuration: 
\vskip 0.1cm
\noindent \textbf{Testbed}\hspace{0.2cm}Our testbed is a 2-sockets server with Intel XeonE5-2699V4 CPU which has 44 cores in total (with HyperThreading off) and 128G main memory.
\vskip 0.1cm
\noindent \textbf{Code Base}\hspace{0.2cm}We base our IS-ASGD code on well-validated open source version of ASGD algorithm\footnote{\url{http://i.stanford.edu/hazy/victor/Hogwild/}}. We also make our evaluation code of IS-ASGD publicly available\footnote{\url{https://github.com/FayW/IS-ASGD.git}}.
\vskip 0.1cm
\noindent \textbf{Datasets}\hspace{0.2cm}Evaluation datasets are from LibSVM\footnote{\url{https://www.csie.ntu.edu.tw/~cjlin/libsvmtools/datasets/}}. \textit{News20} (low dimensionality and relative dense) was used for the purpose of validation in previous SVRG-ASGD works. Such small dataset can not expose the bottlenecking performance problems as discussed in section 1. Yet we still select this dataset for comparison. For other three large-scale datasets, SVRG-ASGD fails to finish training in a reasonable time\footnote{For KDD datasets, SVRG-ASGD takes about 2 hours to finish 1 epoch with 44 threads.} and thus we show no comparison of it.

According to the empirical threshold we set for $\rho$ in section 2, \textit{News20} is importance-balanced while for other datasets, we use simple random shuffling for IS-ASGD.
\vskip 0.1cm
\begin{figure*}[t]
	\centering
	% Equal length
	%\hspace*{\fill}%
	%\subcaptionbox{news20 L1-regularized Hinge Loss, $\lambda = 0.5$}{\includegraphics[scale=0.45]{news_hinge.eps}}%	
	\subcaptionbox{JMLR\_News20, $\lambda = 0.5$}{\includegraphics[scale=0.33]{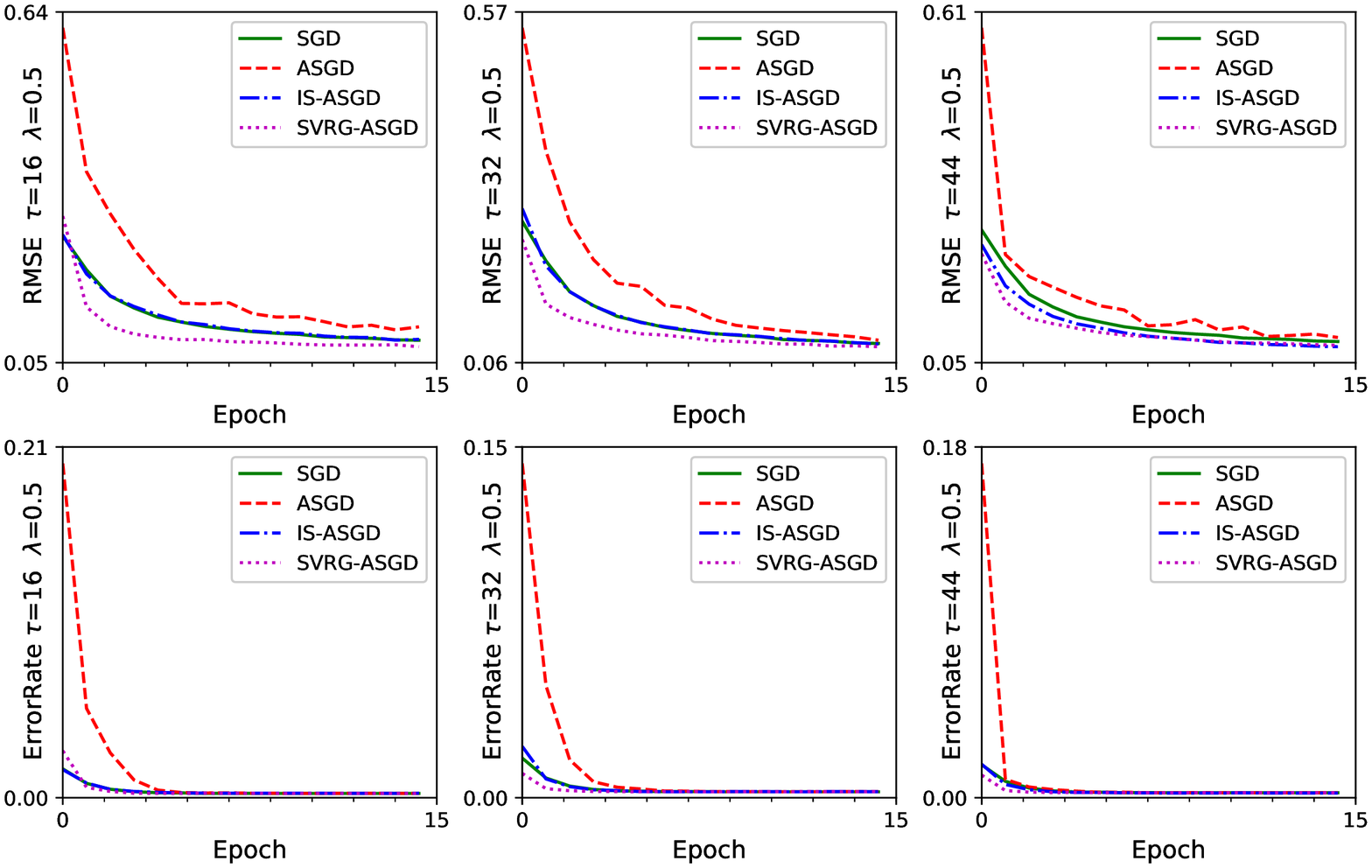}}%
	\subcaptionbox{KDD2010\_Alg., $\lambda = 0.5$}{\includegraphics[scale=0.33]{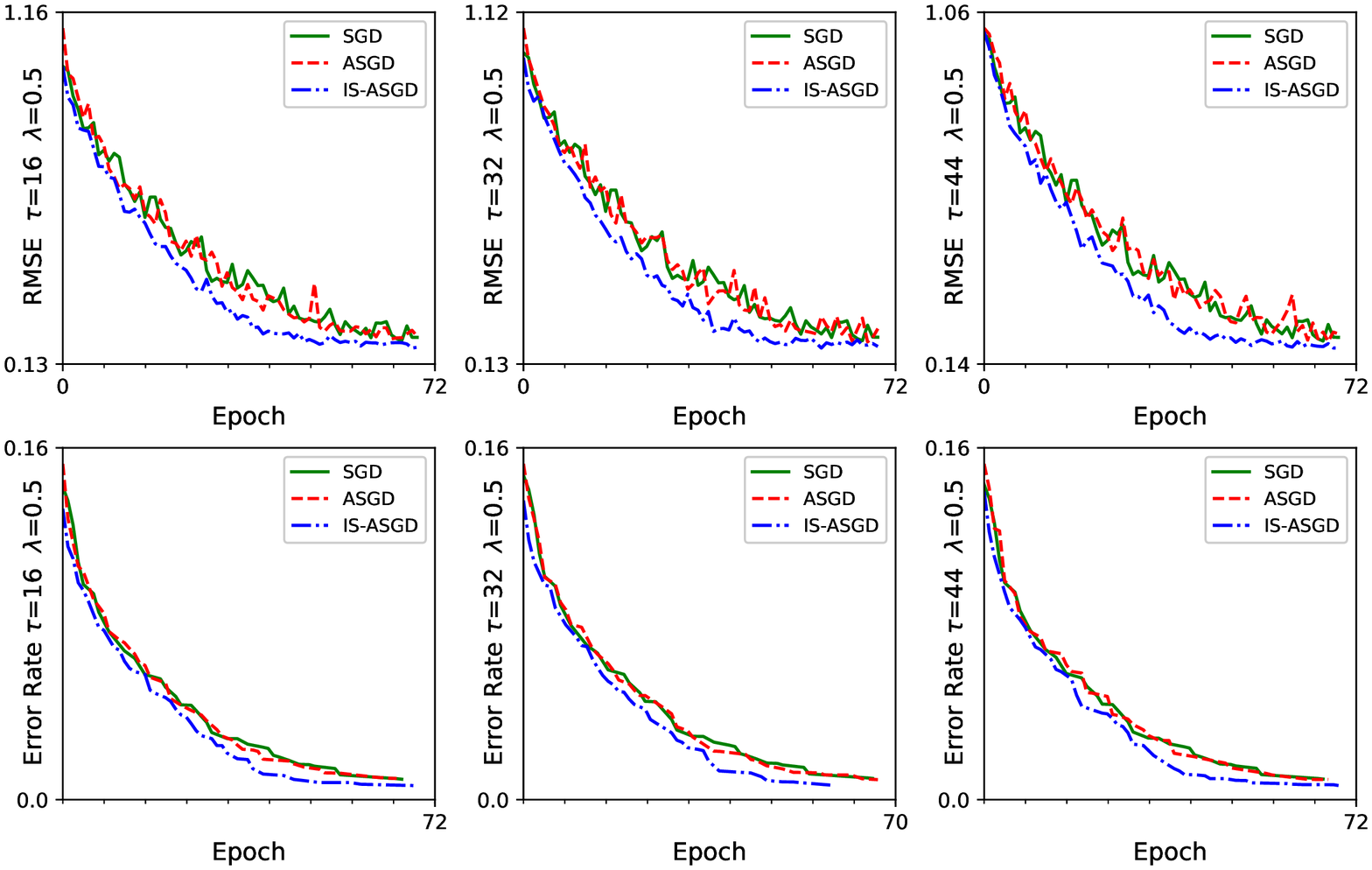}}
	
	\subcaptionbox{ICML\_URL, $\lambda = 0.05$}{\includegraphics[scale=0.33]{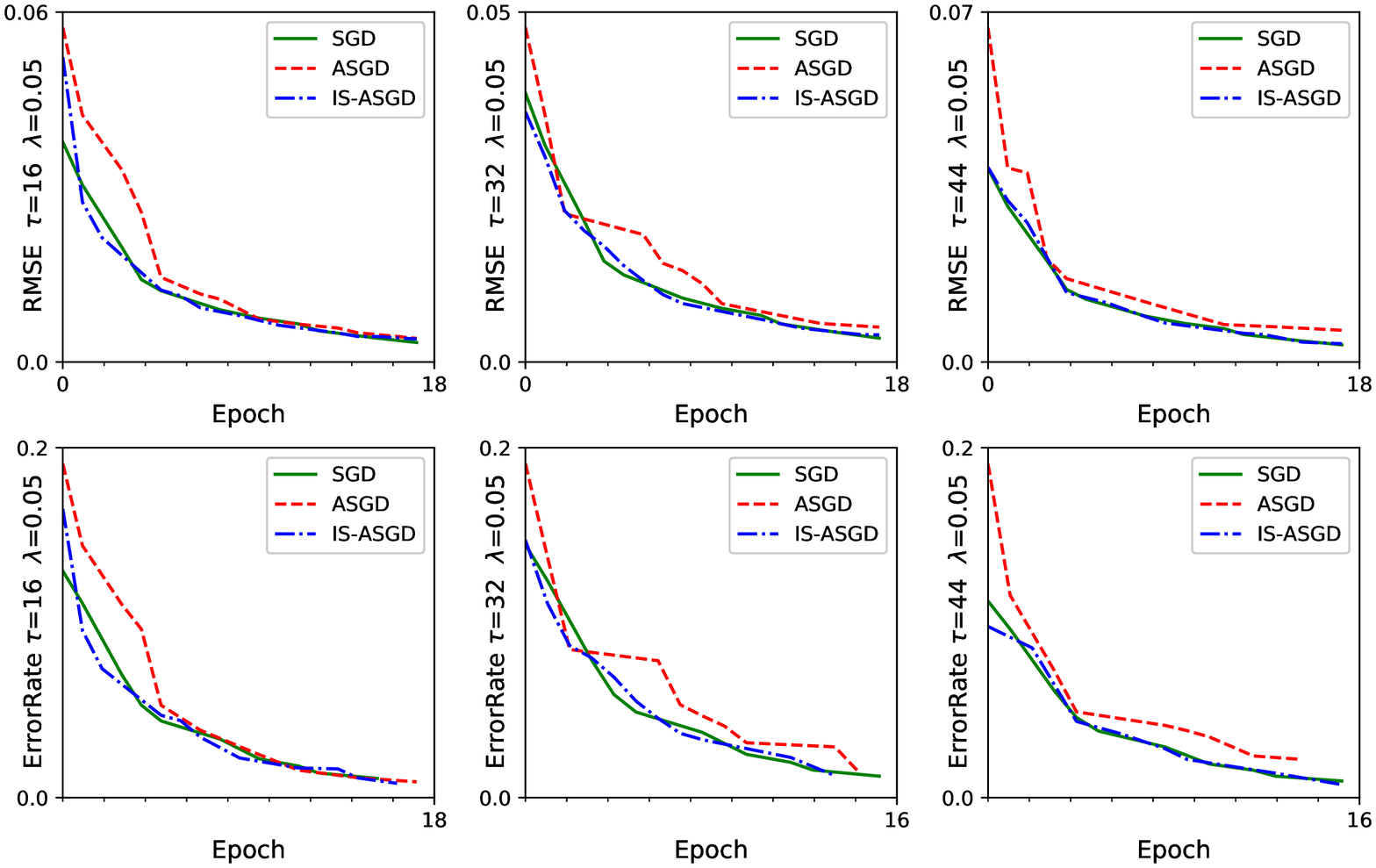}}%
	\subcaptionbox{KDD2010\_Bri., $\lambda = 0.5$}{\includegraphics[scale=0.33]{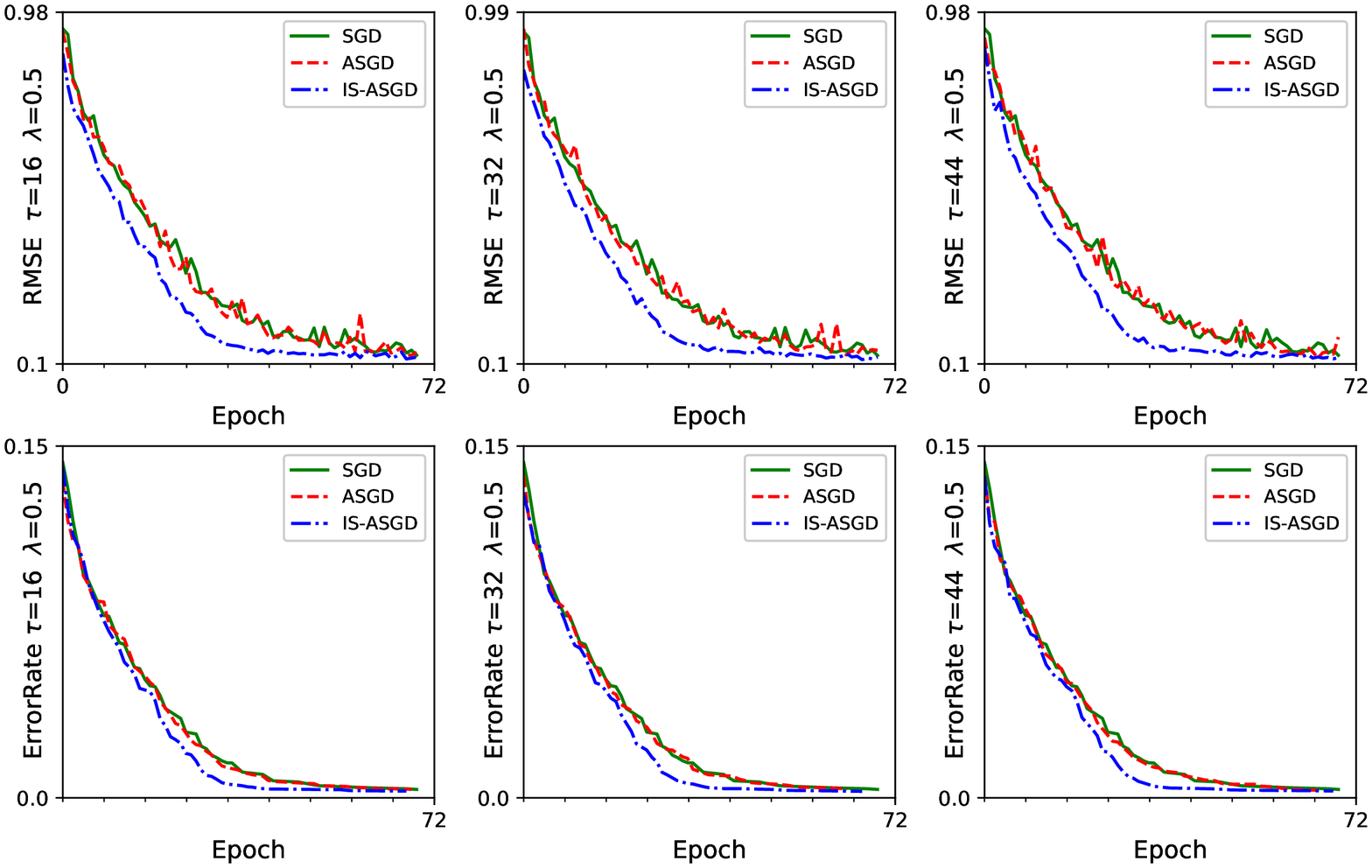}}%
	\caption{Iterative Convergence Result for SGD, ASGD, SVRG-ASGD and IS-ASGD}
	\label{convergence_result_iter}
\end{figure*}
\begin{table*}[]
	\centering
	\caption{Evaluation Datasets}
	\vskip 0.15in
	\begin{small}
		%\begin{sc}
		\begin{tabular}{ccccccc} \hline
			\hline
			%\abovespace\belowspace
			\textbf{Name}&\textbf{Dimension}&\textbf{Instances}&\textbf{ $\nabla f_i$-Spa.}&\textbf{$\psi$}&\textbf{$\rho$}&\textbf{Source}\\ \hline
			%\abovespace
			News20 &1,355,191 & 19,996 &$10^{-3}$ & 0.972 &$5^{-4}$ &JMLR\\
			URL & 3,231,961 & 2,396,130 &$10^{-5}$ & 0.964 &$3^{-4}$ &ICML\\
			Algebra  &20,216,830  & 8,407,752  & $10^{-7}$ & 0.892 & $1^{-4}$ &KDD\\
			Bridge  &29,890,095 & 19,264,097 & $10^{-7}$ & 0.877 & $2^{-4}$ &KDD\\
			\hline
		\end{tabular}
		%	\end{sc}
	\end{small}
	\vskip -0.1in
\end{table*}
\begin{figure*}[ht]
	\centering
	% Equal length
	%\hspace*{\fill}%
	%\subcaptionbox{news20 L1-regularized Hinge Loss, $\lambda = 0.5$}{\includegraphics[scale=0.45]{news_hinge_abs.eps}}%	
	\subcaptionbox{JMLR\_News20, $\lambda = 0.5$}{\includegraphics[scale=0.22]{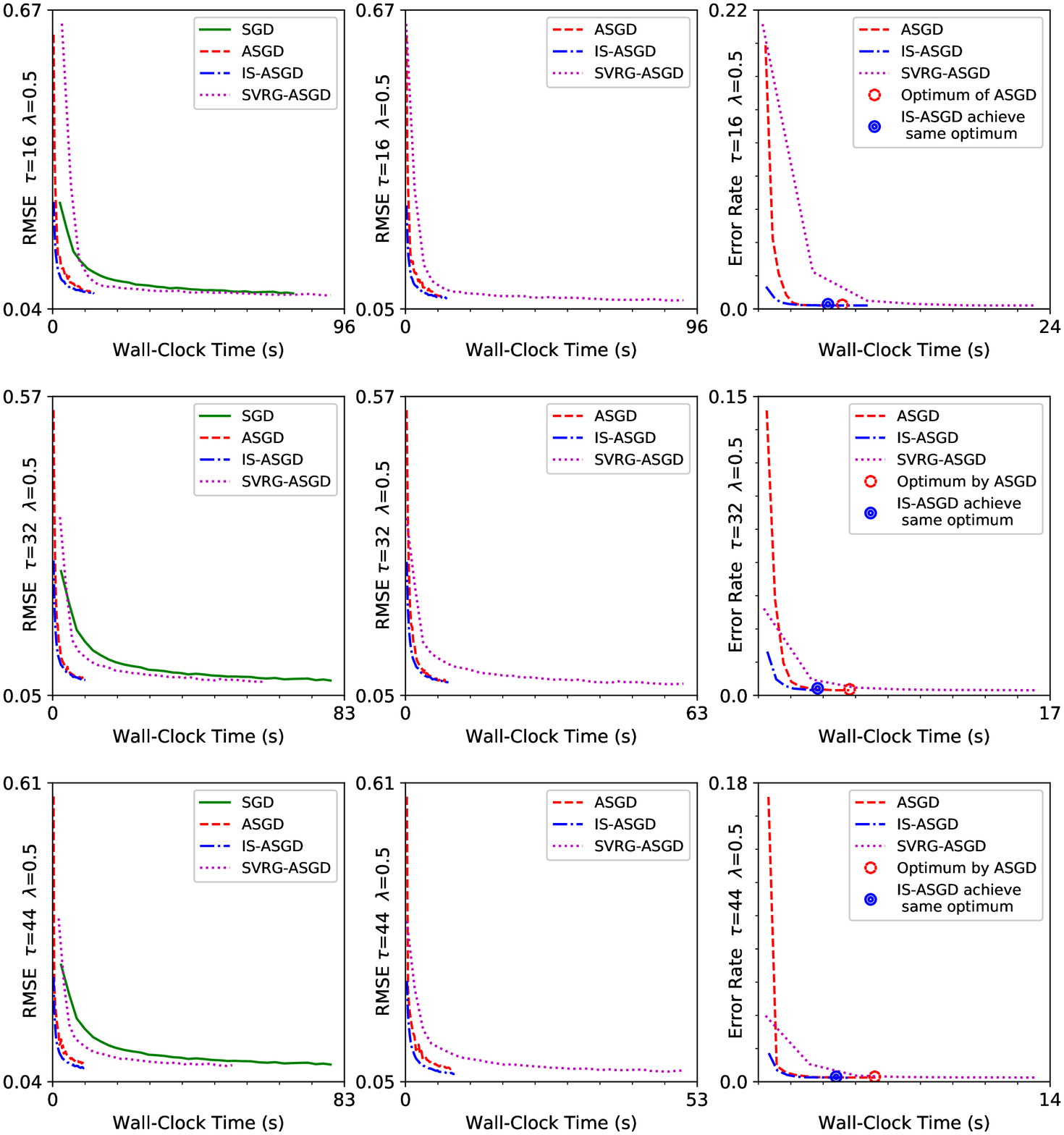}}%
	\subcaptionbox{KDD2010\_Alg., $\lambda = 0.5$}{\includegraphics[scale=0.22]{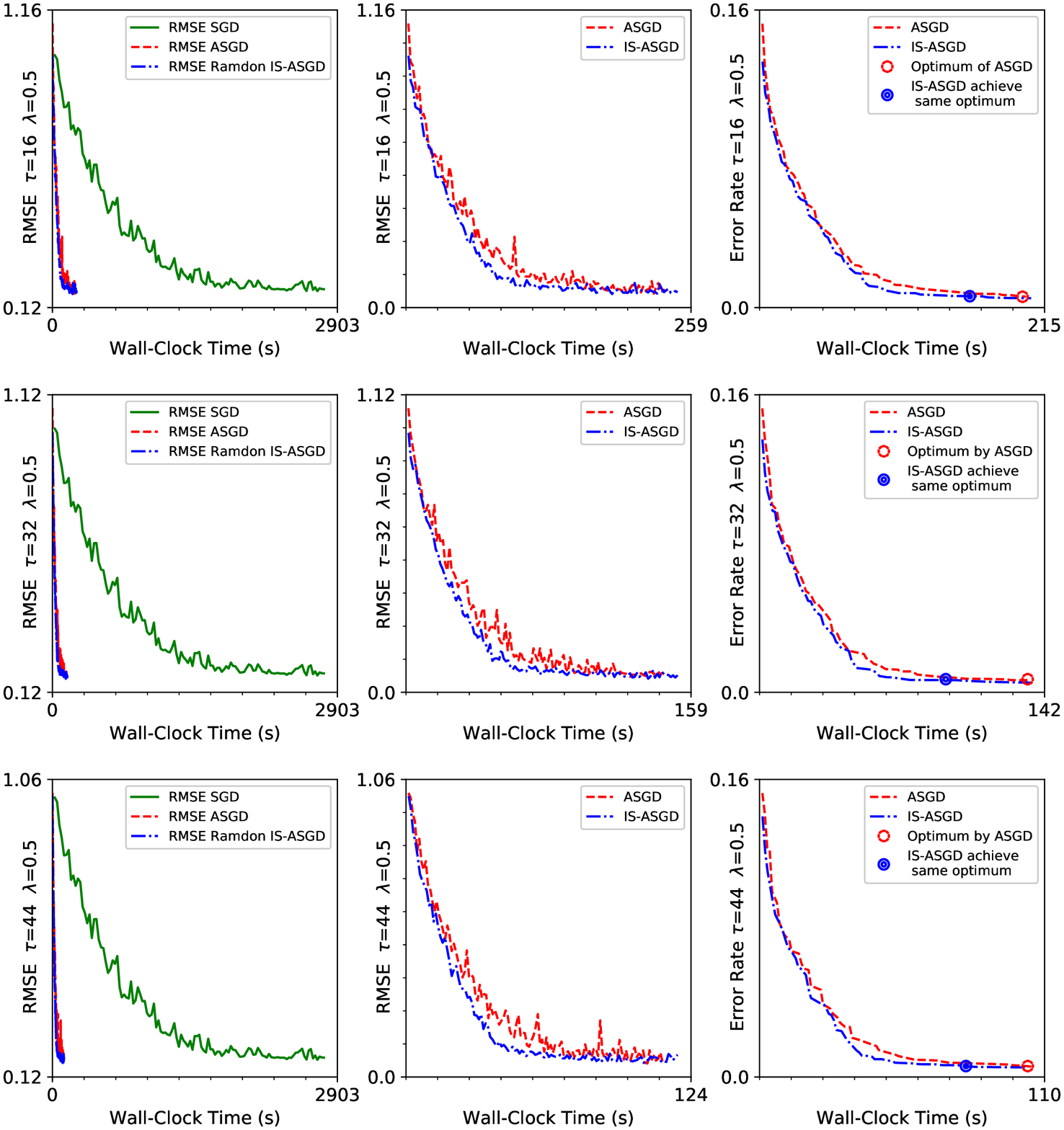}}%
	
	\subcaptionbox{ICML-URL, $\lambda = 0.05$}{\includegraphics[scale=0.22]{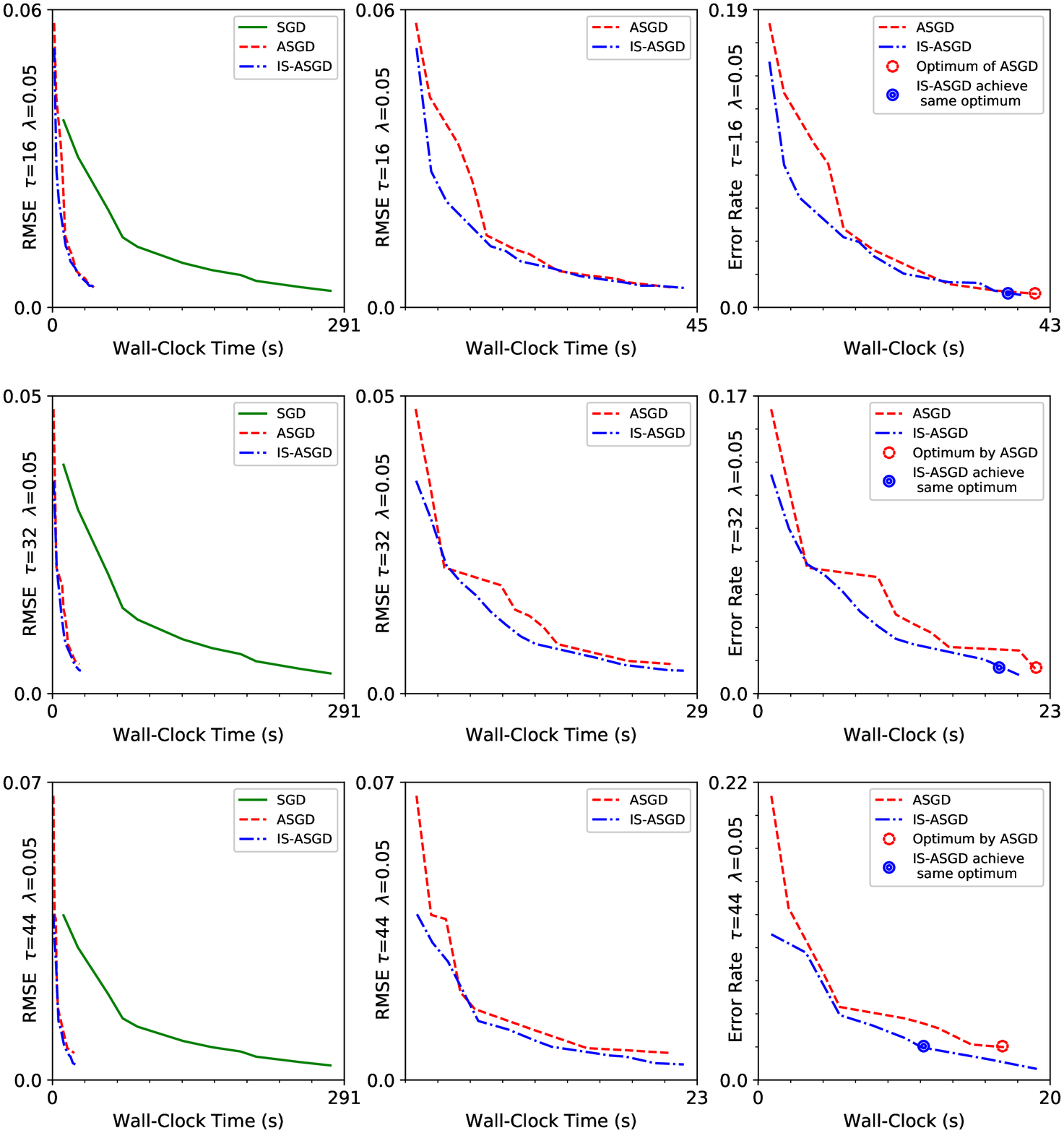}}%
	\subcaptionbox{KDD2010\_Bri., $\lambda = 0.5$}{\includegraphics[scale=0.22]{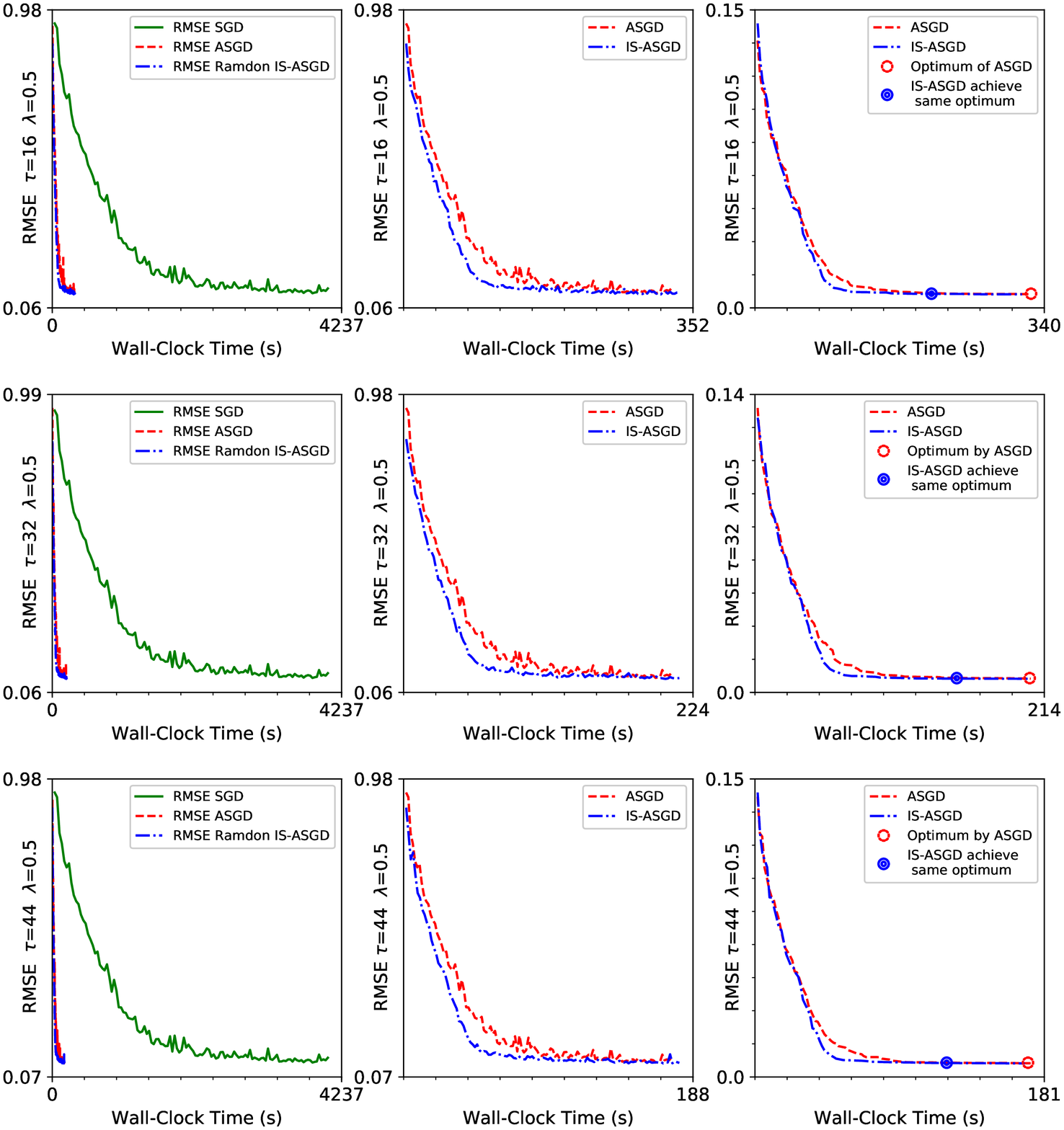}}
	\caption{Absolute Convergence Result for SGD, ASGD, SVRG-ASGD and IS-ASGD\\
	}
	\label{abs_convergence_result}
\end{figure*}
\noindent \textbf{Objective Functions}\hspace{0.2cm} We evaluate IS-ASGD based on the most widely used objective function in classification problems, i.e., L1-regularized cross-entropy loss. It has been popularly adopted from simple linear models to neural network based models.
\vskip 0.1cm
\noindent \textbf{Concurrency}\hspace{0.2cm}16, 32 and 44 threads are evaluated.
\vskip 0.1cm
\noindent \textbf{Algorithms}\hspace{0.2cm}Despite ASGD algorithm which is our target to accelerate, we also evaluate SGD (as baseline) and SVRG-ASGD.
\vskip 0.1cm
\noindent\textbf{SVRG-ASGD}\hspace{0.2cm}We implement SVRG-ASGD by strictly following the proposed algorithm in \cite{Reddi} without the skip-$\mu$ approximation in the available public version since this approximation deteriorates the convergence rate significantly.
\vskip 0.1cm
\noindent \textbf{Metrics}\hspace{0.2cm}Two metrics: rooted mean squared error (RMSE, objective value as the error) and error rate (i.e., misclassification) are evaluated and the error rate is updated once a better result is obtained.

\subsection{Iterative Convergence Rate Acceleration}
For iterative comparison, our expectation is that for same epoch counts, IS-ASGD achieves lower RMSE than ASGD and the error rate of IS-ASGD should be roughly (not strictly) lower than ASGD since a lower cost does not meant to be a lower error rate. Figure~\ref{convergence_result_iter} shows the iterative convergence results of all four datasets in their corresponding sub-figures respectively. 

\vspace{2px}
\noindent\textbf{Comparing to SVRG-ASGD}\quad From Figure~\ref{convergence_result_iter}-a, we see that SVRG-ASGD achieves the best iterative convergence rate with large improvement which comes at the price of magnitudes higher iterative computation cost. Meanwhile, it can also be noticed that with the increasing of the concurrency, the improvements of SVRG diminish quickly. This complies to our previous analysis that SVRG suffers higher potentiality of conflict updates due to its loss of sparsity, which makes it more concurrency-sensitive.

\vspace{2px}
\noindent\textbf{Comparing to ASGD}\quad We notice that the iterative convergence metrics of ASGD are the worst. It is worse than SGD in datasets that are relative dense, e.g., News20 and URL, while in datasets that are sufficiently sparse, e.g., the KDD datasets, its convergence rate are close to SGD. It is also clear that IS-ASGD's iterative convergence rate is much better than ASGD in all cases. In fact, IS-ASGD also achieves better optimum, i.e., a lower final error rate and RMSE as can be seen from the results.

They also show different concurrency-robustness, for instance, in Figure~\ref{convergence_result_iter}-c, when $\tau=16$, ASGD achieves close convergence rate to SGD with the increasing of epochs. However its convergence metrics deteriorates quickly when $\tau$ increases to 32 and 44. Meanwhile, IS-ASGD seems non-effected, it maintains close convergence results with SGD in all concurrencies which is a large improvement of ASGD and shows its concurrency-robustness.

Figure~\ref{convergence_result_iter}-b and d shows similar results that IS-ASGD achieves significant convergence rate accelerations comparing to ASGD and SGD. These two datasets, i.e., KDD2010\_Alg., and Bri. are sparse and have extremely large dimensionality and number of samples. They also have lower $\psi$, as mentioned in Equation 15, section 2.2, the convergence bound improvement of applying IS in SGD is negatively correlated to $\psi$ ant thus IS-ASGD achieves much significant convergence improvements in these two datasets, which is even much better than SGD. While in Figure ~\ref{convergence_result_iter}-a and Figure ~\ref{convergence_result_iter}-c where the datasets are relatively small (potentially higher imbalance), dense and have higher $\psi$, its convergence bound are close to SGD. 

In fact, when conditions discussed in section 3 are satisfied, i.e., datasets are sufficiently sparse, the iterative convergence rate of IS-ASGD will be no worse than SGD while the iterative convergence rate of ASGD will be no better than SGD. When datasets are even more large-scale and has lower $\psi$, the convergence rate improvement of IS-ASGD increases significantly. We can firmly say that IS-ASGD accelerates the iterative convergence rate of ASGD effectively due to its inherited superior convergence bound from IS-SGD. Such improvements will directly result in absolute convergence rate acceleration of IS-ASGD since its iterative time cost remains almost the same with ASGD, and most importantly, it preserves the sparsity.

%\fi

\begin{figure*}[]
	\centering
	% Equal length
	%\hspace*{\fill}%
	%\subcaptionbox{news20 L1-regularized Hinge Loss, $\lambda = 0.5$}{\includegraphics[scale=0.45]{news_hinge_abs.eps}}%	
	\subcaptionbox{JMLR\_News20, $\lambda = 0.5$}{\includegraphics[scale=0.2]{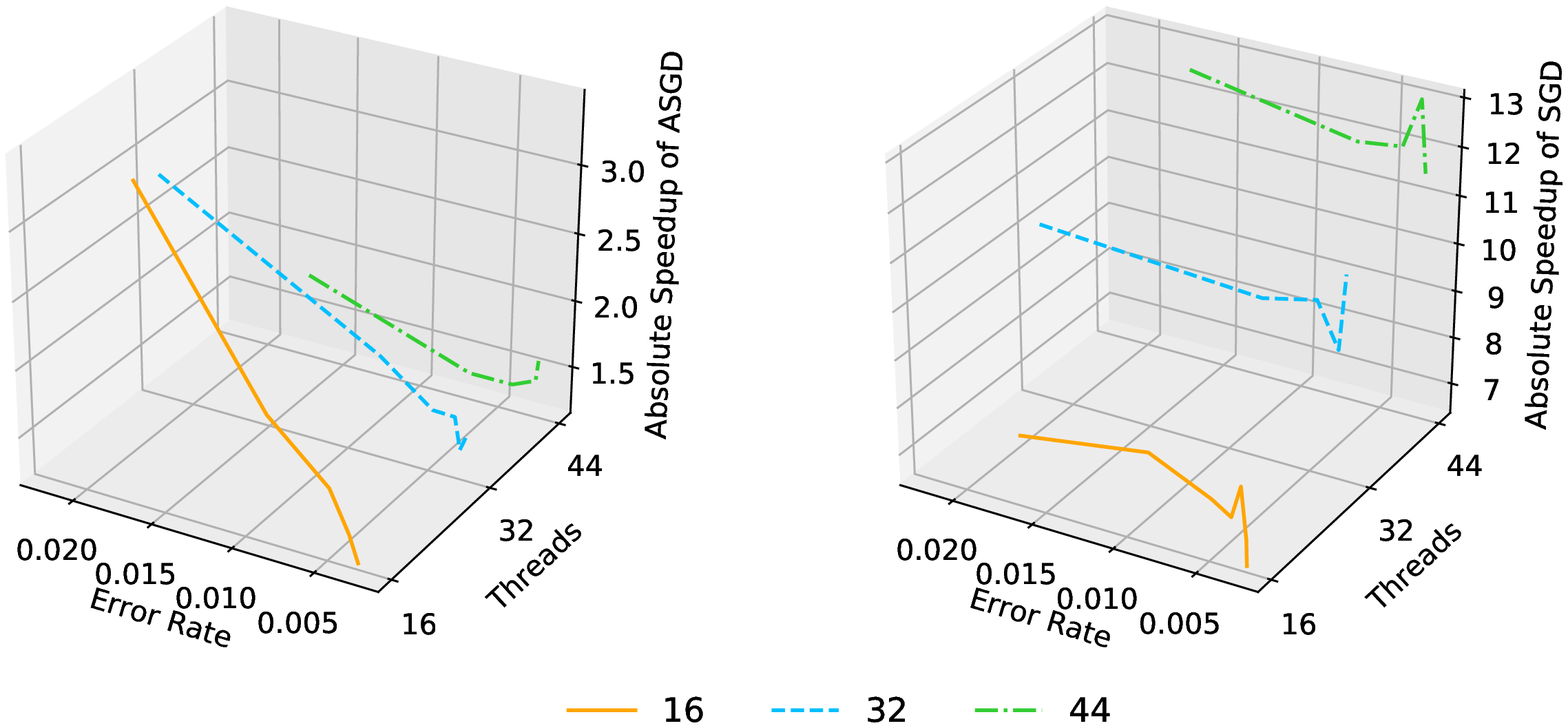}}%
	\subcaptionbox{KDD2010\_Algebra, $\lambda = 0.5$}{\includegraphics[scale=0.2]{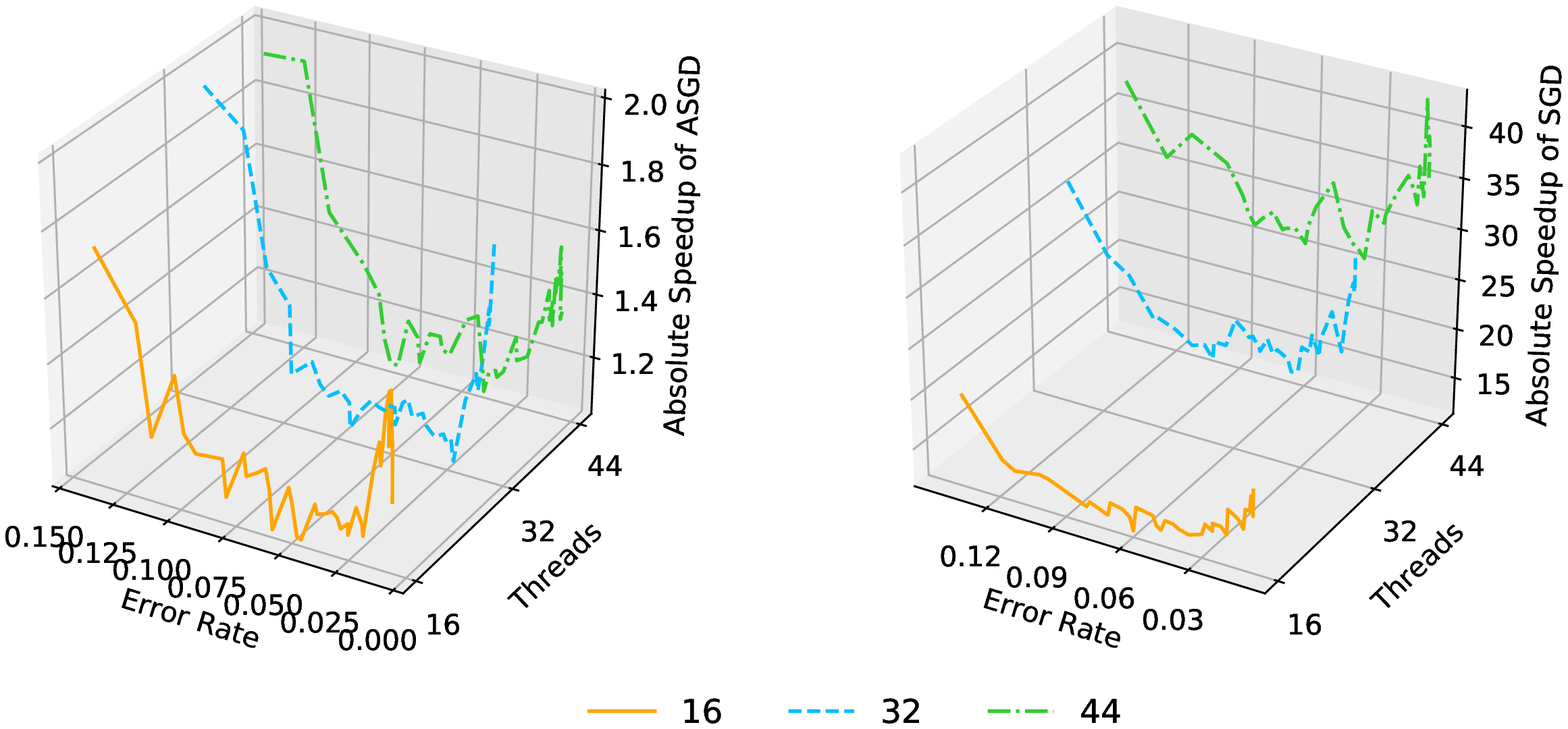}}%
	
	\subcaptionbox{ICML\_URL, $\lambda = 0.05$}{\includegraphics[scale=0.2]{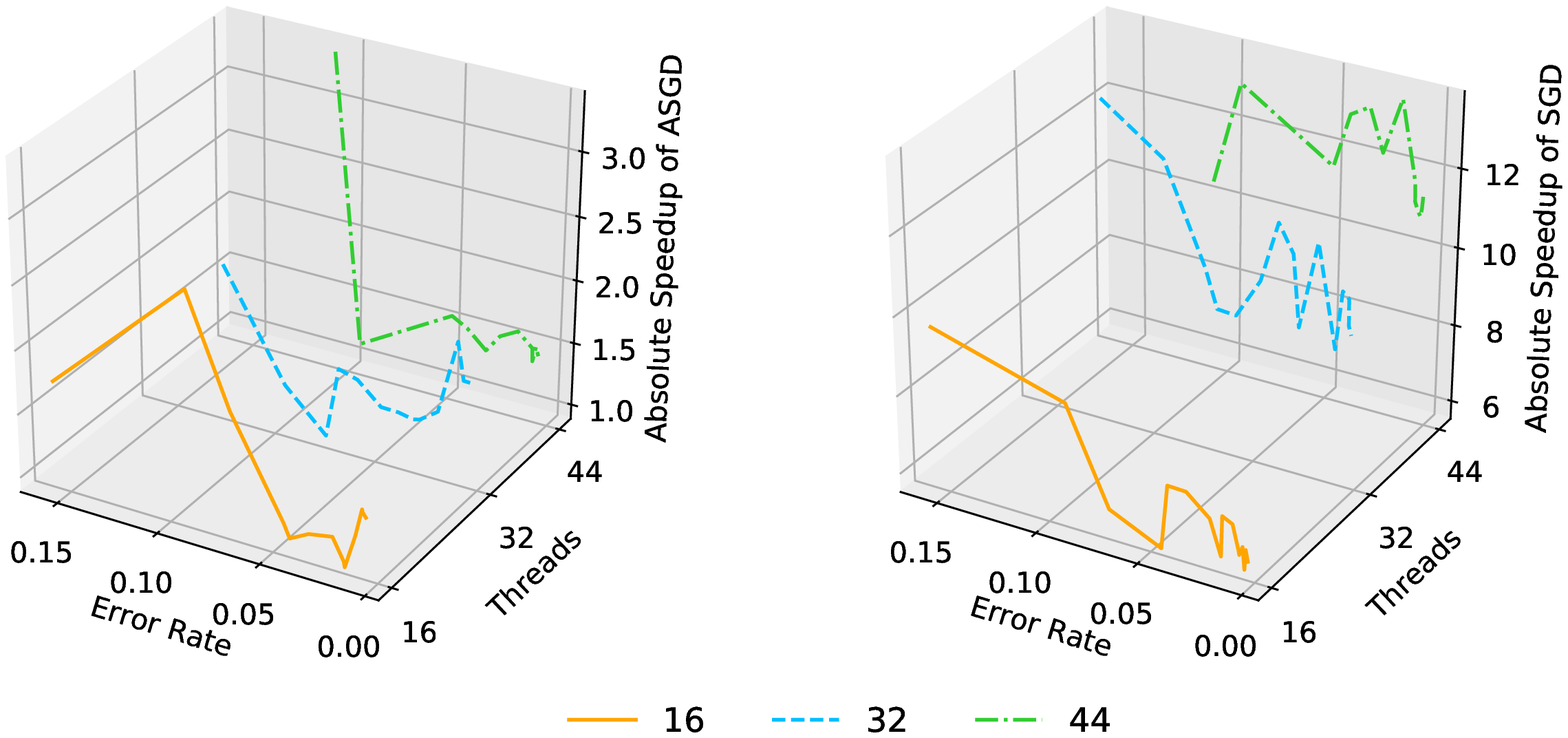}}%
	\subcaptionbox{KDD2010\_Bridge, $\lambda = 0.5$}{\includegraphics[scale=0.2]{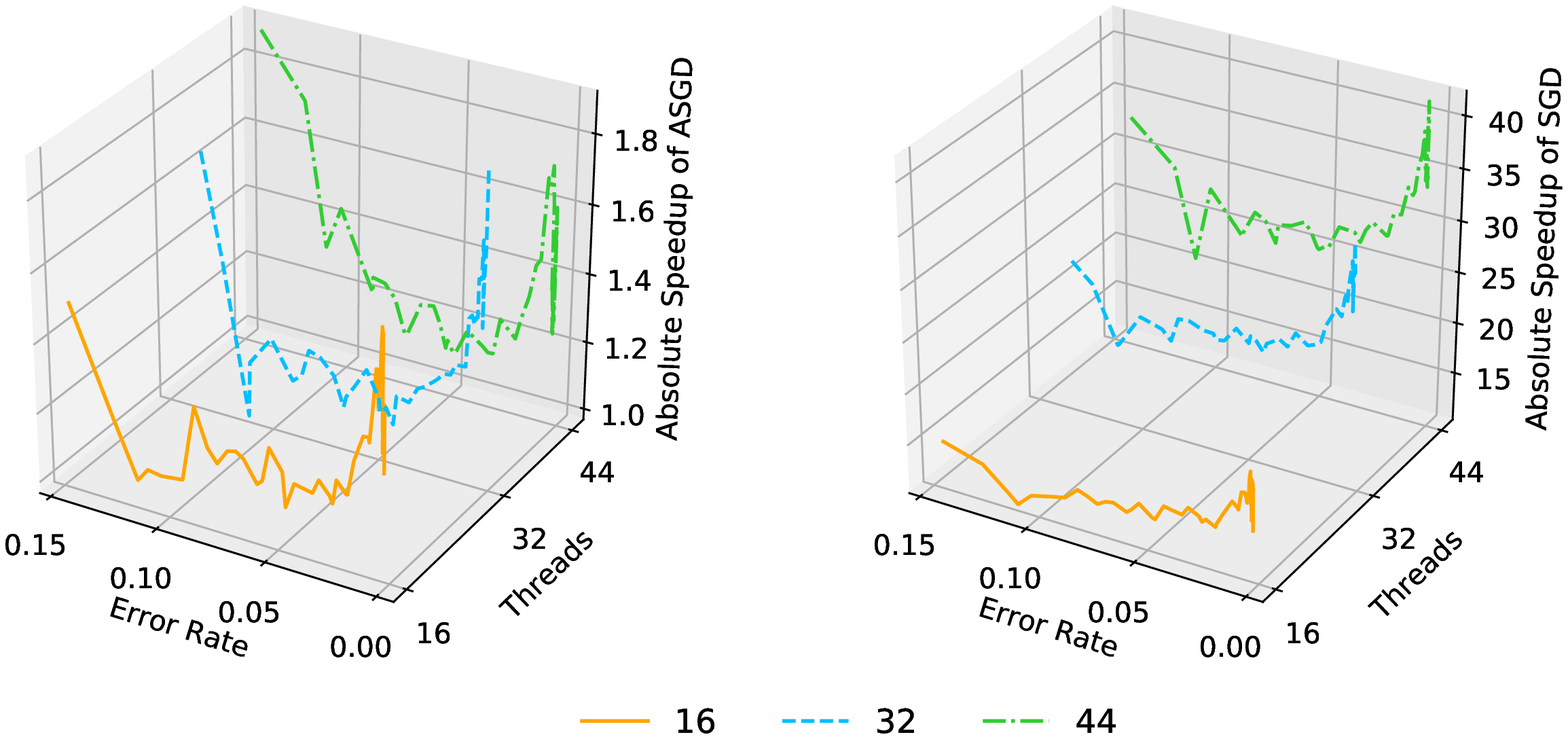}}
	\caption{ErrorRate - Speedup of IS-ASGD}
	\label{real}
\end{figure*}

\subsection{Absolute Convergence Rate Acceleration}
We present the results of absolute convergence acceleration in two forms as shown in Figure~\ref{abs_convergence_result} and Figure~\ref{real} respectively. While the iterative convergence results shown above hold a more academic meaning, the absolute convergence rate is the metric that matters for practical deployments since people always hope to obtain a trained model with less time. Figure \ref{abs_convergence_result} plots the absolute convergence curves with the x-axis as wall-clock in seconds. Be noted that we provide the RMSE comparison between all algorithms in the first column while in the second and third columns only the RMSE and error rate comparison between ASGD and IS-ASGD are shown for a better resolution since their curves are very short comparing to that of SGD and SVRG-ASGD.

As can be seen from Figure~\ref{abs_convergence_result}-a, SVRG-ASGD takes much longer time to achieve the same accuracy than other algorithms despite of its superior iterative convergence rate as shown in Figure~\ref{convergence_result_iter}-a since its iterative computation cost is several magnitudes higher than others. For the comparison between ASGD and IS-ASGD, we specifically plot the final best error rate (referred to as optimum) of ASGD in red circle while the blue dot corresponds to the same optimum achieved by IS-ASGD. This comparison directly shows the final absolute speedup results of IS for ASGD. We see that in Figure~\ref{abs_convergence_result}-d, IS-ASGD achieves a maximum 1.8x acceleration while in other cases its acceleration of the optimum varies depending the datasets and concurrency. In Figure~\ref{abs_convergence_result}-c, it shows that IS-ASGD also achieves better final optimum and higher acceleration of ASGD when concurrency increases.

The results also show that the optimums of error rate are achieved much earlier than that of RMSEs which implies that the acceleration of the early stage of the convergence is more important since for later stage the error rate improvements is very limited. We thus present error-rate/speedup curves for an in-depth inspection.

\vspace{2px}
\noindent\textbf{In-depth: Slice Inspection}\quad Figure~\ref{real} is derived from Figure~\ref{abs_convergence_result} directly, the two 3D-figures in each sub-figure show the speedups of IS-ASGD over ASGD and SGD in a slicing manner for a deeper inspection of the convergence procedure. Its y-axis is the concurrency and z-axis is the absolute speedup of reaching the corresponding error-rate (values are linearly interpolated when needed) on x-axis. 

From the speedup curves, we can see that the speedups are the largest at the early stage and drop in the middle. We can also conclude that the scales of the datasets affect the speedup curves in two aspects, first, for large-scale datasets, i.e, in Figure~\ref{real}-b and d, the speedups rise at the final stage of the convergence procedures which implies that IS-ASGD achieves its best acceleration performance in large-scale datasets when searching for the optimal models. Second, for large-scale datasets, the average speedups of IS-ASGD over ASGD seem to be invariant to the currency as the curves show similar shape and mean, which indicates concurrency-robustness.

As can be summarized, the average speedups of IS-ASGD over ASGD range from 1.26 to 1.97 while the optimum speedups range from 1.13 to 1.54. For the raw computational speedup, it can be seen that the speedups of IS-ASGD over SGD for 16 threads range from 6.39 to 12.29 and increase to 11.89 to 23.53 when threads count increases to 44 depending on the size of dataset. In general, small data size does not achieve a good raw computation speedup. Taking the sampling time into consideration, the raw computational speedups of IS-ASGD are typically 7.7\% to 1.1\% lower than ASGD which are relative small differences. If we generate the sample sequence of IS-ASGD for each thread only once and simply shuffle it every epoch, there will be no computation performance gap between ASGD and IS-ASGD. In fact, such approximation work well in practice according to our evaluation.
\subsection{Discussion: When Datasets are Dense}
The main reason that causes SVRG performs inefficiently in large-scale sparse datasets is its reliance on the dense gradient $\mu$ which is magnitudes larger than the stochastic gradient $\nabla f_i$. On the other hand, if the datasets are dense, e.g., when the sparsity of $\nabla f_i$ is higher than $10^{-3}$ which is close to $\mu$, SVRG-ASGD prevails since their iterative computation costs are in same magnitude and SVRG-ASGD's iterative convergence rate is much higher. Additionally, when datasets are small-scale, the whole training procedure tends to finish quickly. For this case, the performance bottleneck is the overhead of multi-process scheduling, all reduce operation, etc., instead of the computation, and thus SVRG-ASGD is likely to outperform the other algorithms. Since small datasets are of only academic meanings, it seems that the proper applications for SVRG-ASGDs are the scenarios when using ASGD for relative dense datasets. However for most large-scale optimizations, the datasets are typically sparse with its sparsity significantly lower than $10^{-3}$.
\section{Conclusion}
Techniques for the acceleration of the convergence rate of asynchronous stochastic optimizations are of great importance and has long been a hot research field. In this paper we located several unidentified bottlenecking issues for current SVRG-based ASGD acceleration algorithms in large-scale sparse datasets and propose the novel IS-ASGD algorithm which avoids the above bottlenecking issues naturally. Its key advantage lies in the capability of accelerating the iterative convergence rate of ASGD with few increasing of the iterative time cost which in turn results in effective acceleration of the absolute convergence rate. We use importance-balancing trick to balance the importance between asynchronous cores/nodes which helps preserving the optimal VR performance of IS. Moreover, we theoretically proved the convergence bound of IS-ASGD which shows that IS-ASGD speeds up IS-SGD almost linearly and consequently inherits its superior convergence bound over ASGD and SGD. The experimental evaluation results clearly verify that IS-ASGD achieves 1.13~1.54x absolute convergence rates acceleration for ASGD. Evaluation codes can be publicly accessed.
\bibliographystyle{ACM-Reference-Format}
\bibliography{final} 
\end{document}